\documentclass[journal,draftclsnofoot,onecolumn,12pt,twoside]{IEEEtranTCOM}

%
% If IEEEtran.cls has not been installed into the LaTeX system files,
% manually specify the path to it like:
% \documentclass[journal]{../sty/IEEEtran}

\normalsize

% Some very useful LaTeX packages include:
% (uncomment the ones you want to load)

% *** MISC UTILITY PACKAGES ***
%
%\usepackage{ifpdf}
% Heiko Oberdiek's ifpdf.sty is very useful if you need conditional
% compilation based on whether the output is pdf or dvi.
% usage:
% \ifpdf
%   % pdf code
% \else
%   % dvi code
% \fi
% The latest version of ifpdf.sty can be obtained from:
% http://www.ctan.org/tex-archive/macros/latex/contrib/oberdiek/
% Also, note that IEEEtran.cls V1.7 and later provides a builtin
% \ifCLASSINFOpdf conditional that works the same way.
% When switching from latex to pdflatex and vice-versa, the compiler may
% have to be run twice to clear warning/error messages.

% *** CITATION PACKAGES ***
%
\usepackage{cite}
% cite.sty was written by Donald Arseneau
% V1.6 and later of IEEEtran pre-defines the format of the cite.sty package
% \cite{} output to follow that of IEEE. Loading the cite package will
% result in citation numbers being automatically sorted and properly
% "compressed/ranged". e.g., [1], [9], [2], [7], [5], [6] without using
% cite.sty will become [1], [2], [5]--[7], [9] using cite.sty. cite.sty's
% \cite will automatically add leading space, if needed. Use cite.sty's
% noadjust option (cite.sty V3.8 and later) if you want to turn this off.
% cite.sty is already installed on most LaTeX systems. Be sure and use
% version 4.0 (2003-05-27) and later if using hyperref.sty. cite.sty does
% not currently provide for hyperlinked citations.
% The latest version can be obtained at:
% http://www.ctan.org/tex-archive/macros/latex/contrib/cite/
% The documentation is contained in the cite.sty file itself.

% *** GRAPHICS RELATED PACKAGES ***
%
\ifCLASSINFOpdf
  % \usepackage[pdftex]{graphicx}
  % declare the path(s) where your graphic files are
  % \graphicspath{{../pdf/}{../jpeg/}}
  % and their extensions so you won't have to specify these with
  % every instance of \includegraphics
  % \DeclareGraphicsExtensions{.pdf,.jpeg,.png}
\else
  % or other class option (dvipsone, dvipdf, if not using dvips). graphicx
  % will default to the driver specified in the system graphics.cfg if no
  % driver is specified.
  % \usepackage[dvips]{graphicx}
  % declare the path(s) where your graphic files are
  % \graphicspath{{../eps/}}
  % and their extensions so you won't have to specify these with
  % every instance of \includegraphics
  % \DeclareGraphicsExtensions{.eps}
\fi

% graphicx was written by David Carlisle and Sebastian Rahtz. It is
% required if you want graphics, photos, etc. graphicx.sty is already
% installed on most LaTeX systems. The latest version and documentation can
% be obtained at: 
% http://www.ctan.org/tex-archive/macros/latex/required/graphics/
% Another good source of documentation is "Using Imported Graphics in
% LaTeX2e" by Keith Reckdahl which can be found as epslatex.ps or
% epslatex.pdf at: http://www.ctan.org/tex-archive/info/
%
% latex, and pdflatex in dvi mode, support graphics in encapsulated
% postscript (.eps) format. pdflatex in pdf mode supports graphics
% in .pdf, .jpeg, .png and .mps (metapost) formats. Users should ensure
% that all non-photo figures use a vector format (.eps, .pdf, .mps) and
% not a bitmapped formats (.jpeg, .png). IEEE frowns on bitmapped formats
% which can result in "jaggedy"/blurry rendering of lines and letters as
% well as large increases in file sizes.
%
% You can find documentation about the pdfTeX application at:
% http://www.tug.org/applications/pdftex

% *** MATH PACKAGES ***
%
\usepackage[cmex10]{amsmath}

\usepackage[ruled,noline,linesnumbered]{algorithm2e}

\hyphenation{op-tical net-works semi-conduc-tor}

%%%%%%%%%%%%%%%%%%%%%%%%%%%%%%%%%%%%%%%%%%%%%%%%%%%%%%%%%%%%%%%%%%%%%%%%%%%%%%%%%%%%%%%%%%%%%%%%%%%%%%%%%%%
%%%%%%%%%%%%%%%%%%%%%%%%%%%%%%%%%%%%%%%%%%%%%%%%%%%%%%%%%%%%%%%%%%%%%%%%%%%%%%%%%%%%%%%%%%%%%%%%%%%%%%%%%%%
%%%%%%%%%%%%%%%%%%%%%%%%%%%%%%%%%%%%%%%%%%%%%%%%%%%%%%%%%%%%%%%%%%%%%%%%%%%%%%%%%%%%%%%%%%%%%%%%%%%%%%%%%%%
                                            % MY PACKAGES %
\usepackage{import}
\usepackage{amsfonts}
\usepackage{amssymb}
\usepackage{theorem}
\usepackage[switch]{lineno} % to number text lines
\usepackage{graphicx}
\usepackage{xcolor}
\usepackage{subcaption}
\usepackage{float}

%%%%%%%%%%%%%%%%%%%%%%%%%%%%%%%%%%%%%%%%%%%%%%%%%%%%%%%%%%%%%%%%%%%%%%%%%%%%%%%%%%%%%%%%%%%%%%%%%%%%%%%%%%%
%%%%%%%%%%%%%%%%%%%%%%%%%%%%%%%%%%%%%%%%%%%%%%%%%%%%%%%%%%%%%%%%%%%%%%%%%%%%%%%%%%%%%%%%%%%%%%%%%%%%%%%%%%%
%%%%%%%%%%%%%%%%%%%%%%%%%%%%%%%%%%%%%%%%%%%%%%%%%%%%%%%%%%%%%%%%%%%%%%%%%%%%%%%%%%%%%%%%%%%%%%%%%%%%%%%%%%%

%%%%%%%%%%%%%%%%%%%%%%%%%%%%%%%%%%%%%%%%%%%%%%%%%%%%%%%%%%%%%%%%%%%%%%%%%%%%%%%%%%%%%%%%%%%%%%%%%%%%%%%%%%%
%%%%%%%%%%%%%%%%%%%%%%%%%%%%%%%%%%%%%%%%%%%%%%%%%%%%%%%%%%%%%%%%%%%%%%%%%%%%%%%%%%%%%%%%%%%%%%%%%%%%%%%%%%%
%%%%%%%%%%%%%%%%%%%%%%%%%%%%%%%%%%%%%%%%%%%%%%%%%%%%%%%%%%%%%%%%%%%%%%%%%%%%%%%%%%%%%%%%%%%%%%%%%%%%%%%%%%%
                                            % DEFINITIONS %

\def\C{{\bf C}}

\def\X{{\bf X}}
\def\x{{\bf x}}
\def\a{{\bf a}}
\def\z{{\bf z}}

\def\y{{\bf y}}
\def\n{{\bf n}}
\def\b{{\bf b}}
\def\w{{\bf w}}

\def\U{{\bf U}}

\def\V{{\bf V}}

\def\h{{\bf h}}
\def\I{{\bf I}}
\def\R{{\bf R}}

\def\W{{\bf W}}

\def\Y{{\bf Y}}

\def\X{{\bf X}}
\def\A{{\bf A}}
\def\B{{\bf B}}
\def\D{{\bf D}}

\def\Y{{\bf Y}}

\def\det{\operatorname{det}}

\def\diag{\operatorname{diag}}

\def\det{\operatorname{det}}

\def\deltaB{\boldsymbol{\delta}}
\def\He{\textnormal{H}}

% Definiciones Carlos % 
\def\Gras{\mathbb{G}(M,{\mathbb{C}}^{T})}

{\theorembodyfont{\slshape}
\newtheorem{theorem}{Theorem}
\newtheorem{proposition}{Proposition}
\newtheorem{corollary}{Corollary}
\newtheorem{lemma}{Lemma}
}
{\theorembodyfont{\rmfamily}
\newtheorem{remark}{Remark}

}

%QED box, from the TeXbook, p. 106.

%%%%%%%%%%%%%%%%%%%%%%%%%%%%%%%%%%%%%%%%%%%%%%%%%%%%%%%%%%%%%%%%%%%%%%%%%%%%%%%%%%%%%%%%%%%%%%%%%%%%%%%%%%%
%%%%%%%%%%%%%%%%%%%%%%%%%%%%%%%%%%%%%%%%%%%%%%%%%%%%%%%%%%%%%%%%%%%%%%%%%%%%%%%%%%%%%%%%%%%%%%%%%%%%%%%%%%%
%%%%%%%%%%%%%%%%%%%%%%%%%%%%%%%%%%%%%%%%%%%%%%%%%%%%%%%%%%%%%%%%%%%%%%%%%%%%%%%%%%%%%%%%%%%%%%%%%%%%%%%%%%%

\begin{document}

% \linenumbers

%
% paper title
% can use linebreaks \\ within to get better formatting as desired
%\title{A Measure Preserving Mapping for Structured SIMO Grassmannian Constellations}
\title{Constellations on the Sphere with Efficient Encoding-Decoding for Noncoherent Communications}
%
%
% author names and IEEE memberships
% note positions of commas and nonbreaking spaces ( ~ ) LaTeX will not break
% a structure at a ~ so this keeps an author's name from being broken across
% two lines.
% use \thanks{} to gain access to the first footnote area
% a separate \thanks must be used for each paragraph as LaTeX2e's \thanks
% was not built to handle multiple paragraphs
%

\author{Diego~Cuevas, %~\IEEEmembership{Student~Member,~IEEE,}
        Javier~{\'A}lvarez-Vizoso,
        Carlos~Beltr{\'a}n,
        Ignacio~Santamaría, %~\IEEEmembership{Senior~Member,~IEEE}% <-this % stops a space
        V{\' i}t~Tu{\v c}ek
        and Gunnar~Peters%
\thanks{This work was supported by Huawei Technologies, Sweden under the project GRASSCOM. The work of D. Cuevas was also partly supported under grant FPU20/03563 funded by Ministerio de Universidades (MIU), Spain. The work of Carlos Beltr{\'a}n was also partly supported under grant PID2020-113887GB-I00 funded by MCIN/ AEI /10.13039/501100011033. The work of I. Santamaria was also partly supported under grant PID2019-104958RB-C43 (ADELE) funded by MCIN/ AEI /10.13039/501100011033. A short preliminary version of this paper was submitted to the 2022 IEEE Global Communications Conference: Signal Processing for Communications (Globecom 2022 SPC).}
\thanks{D. Cuevas, J. {\'A}lvarez Vizoso and I.Santamaria are with the Department
of Communications Engineering, Universidad de Cantabria, 39005 Santander, Spain (e-mail: diego.cuevas@unican.es; javier.alvarezvizoso@unican.es; i.santamaria@unican.es).}
\thanks{C. Beltr{\'a}n is with the Department of Mathematics, Statistics and Computing, Universidad de Cantabria, 39005 Santander, Spain (e-mail: carlos.beltran@unican.es).}
\thanks{V. Tucek and G. Peters are with the Department of Wireless Algorithms, Huawei Technologies, 16440 Kista, Sweden (email: vit.tucek@huawei.com; gunnar.peters@huawei.com)}}% <-this % stops a space

% note the % following the last \IEEEmembership and also \thanks - 
% these prevent an unwanted space from occurring between the last author name
% and the end of the author line. i.e., if you had this:
% 
% \author{....lastname \thanks{...} \thanks{...} }
%                     ^------------^------------^----Do not want these spaces!
%
% a space would be appended to the last name and could cause every name on that
% line to be shifted left slightly. This is one of those "LaTeX things". For
% instance, "\textbf{A} \textbf{B}" will typeset as "A B" not "AB". To get
% "AB" then you have to do: "\textbf{A}\textbf{B}"
% \thanks is no different in this regard, so shield the last } of each \thanks
% that ends a line with a % and do not let a space in before the next \thanks.
% Spaces after \IEEEmembership other than the last one are OK (and needed) as
% you are supposed to have spaces between the names. For what it is worth,
% this is a minor point as most people would not even notice if the said evil
% space somehow managed to creep in.

% The paper headers
\markboth{Preprint}%
{Cuevas \textit{\lowercase{et al}}: Constellations on the Sphere with Efficient Encoding-Decoding for Noncoherent Communications}
% The only time the second header will appear is for the odd numbered pages
% after the title page when using the twoside option.
% 
% *** Note that you probably will NOT want to include the author's ***
% *** name in the headers of peer review papers.                   ***
% You can use \ifCLASSOPTIONpeerreview for conditional compilation here if
% you desire.

% If you want to put a publisher's ID mark on the page you can do it like
% this:
%\IEEEpubid{0000--0000/00\$00.00~\copyright~2007 IEEE}
% Remember, if you use this you must call \IEEEpubidadjcol in the second
% column for its text to clear the IEEEpubid mark.

% use for special paper notices
%\IEEEspecialpapernotice{(Invited Paper)}

% make the title area
\maketitle

\begin{abstract}
%\boldmath
In this paper, we propose a new structured Grassmannian constellation for noncoherent communications over single-input multiple-output (SIMO) Rayleigh block-fading channels. The constellation, which we call Grass-Lattice, is based on a measure preserving mapping from the unit hypercube to the Grassmannian of lines. The constellation structure allows for on-the-fly symbol generation, low-complexity decoding, and simple bit-to-symbol Gray coding. Simulation results show that Grass-Lattice has symbol and bit error rate performance close to that of a numerically optimized unstructured constellation, and is more power efficient than other structured constellations proposed in the literature and a coherent pilot-based scheme.
\end{abstract}
% IEEEtran.cls defaults to using nonbold math in the Abstract.
% This preserves the distinction between vectors and scalars. However,
% if the journal you are submitting to favors bold math in the abstract,
% then you can use LaTeX's standard command \boldmath at the very start
% of the abstract to achieve this. Many IEEE journals frown on math
% in the abstract anyway.

% Note that keywords are not normally used for peerreview papers.
\begin{IEEEkeywords}
Noncoherent communications, Grassmannian constellations, SIMO channels, measure-preserving mapping.
\end{IEEEkeywords}

% For peer review papers, you can put extra information on the cover
% page as needed:
% \ifCLASSOPTIONpeerreview
% \begin{center} \bfseries EDICS Category: 3-BBND \end{center}
% \fi
%
% For peerreview papers, this IEEEtran command inserts a page break and
% creates the second title. It will be ignored for other modes.
\IEEEpeerreviewmaketitle

% The very first letter is a 2 line initial drop letter followed
% by the rest of the first word in caps.
% 
% form to use if the first word consists of a single letter:
% \IEEEPARstart{A}{demo} file is ....
% 
% form to use if you need the single drop letter followed by
% normal text (unknown if ever used by IEEE):
% \IEEEPARstart{A}{}demo file is ....
% 
% Some journals put the first two words in caps:
% \IEEEPARstart{T}{his demo} file is ....
% 
% Here we have the typical use of a "T" for an initial drop letter
% and "HIS" in caps to complete the first word.

% You must have at least 2 lines in the paragraph with the drop letter
% (should never be an issue)

% needed in second column of first page if using \IEEEpubid
%\IEEEpubidadjcol

 \section{Introduction}

\IEEEPARstart{I}{n} communications over fading channels, it is usually assumed that the channel state information (CSI) is typically estimated at the receiver side by periodic transmission of a few known pilots and then it is used for decoding at the receiver and/or for precoding at the transmitter. These are known as coherent schemes. The channel capacity for coherent systems is known to increase linearly with the minimum number of transmit and receive antennas at high signal-to-noise (SNR) ratio \cite{telatar,Foschini} when the channel remains approximately constant over a long coherence time (slowly fading scenarios). 

However, in fast fading scenarios or massive MIMO systems for ultra-reliable low-latency communications (URLLC), to obtain an accurate channel estimate would require pilots to occupy a disproportionate fraction of communication resources. These new scenarios that have emerged with 5G and B5G systems motivate the use of noncoherent communications schemes in which neither the transmitter nor the receiver have any knowledge about the instantaneous CSI.

Despite the receiver not having CSI, a significant fraction of the coherent capacity can be achieved in noncoherent communication systems when the signal-to-noise ratio (SNR) is high, as shown in \cite{marzetta_capacity, Hochwald00,tse_noncoherent,Conway96}. For the case of single-input multiple-output (SIMO) channels, which is the one we focus on in this paper, these works proved that at high SNR under additive Gaussian noise, assuming a Rayleigh block-fading SIMO channel with coherence time $T \geq 2$, the optimal strategy achieving the capacity is to transmit isotropically distributed unitary vectors belonging to the Grassmannian of lines or projective space \cite{tse_noncoherent,Conway96}. Equivalently, these constellations correspond to packings on the sphere. Therefore, in noncoherent SIMO communication systems the information is carried by the column span of the transmitted $T$-dimensional vector, ${\bf x}$, which is not affected by the SIMO channel ${\bf h}$. In other words, the column span of $\mathbf{x}$ is identical to the column span of $\mathbf{x} \mathbf{h}^{\textnormal{T}}$.
 
 An extensive research has been conducted on the design of noncoherent constellations as optimal packings on the Grassmann manifold \cite{Zhao04,dhillon2007constructing,Beko07,gohary,Cuevas21WSA,AlvarezEusipco22,Varanasi02,CuevasTCOM,Hochwald_TIT2000,Soleymani_TIT22,Hughes_TIT2000,Pitaval12,cipriano,cube_split}. Some experimental evaluation of Grassmannian constellations in noncoherent communications using over-the-air transmission has been reported in \cite{JacoboGrass}. 
 Existing constellation designs can be generically categorized into two groups: structured or unstructured. Among the unstructured designs we can mention the alternating projection method \cite{dhillon2007constructing}, the numerical methods in \cite{Beko07,gohary,Cuevas21WSA,AlvarezEusipco22}, which optimize certain distance measures on the Grassmannian (e.g., chordal or spectral), and the methods proposed in \cite{Varanasi02} and \cite{CuevasTCOM}, which maximize the so-called diversity product \cite{HanTIT06}.
 
On the other side, structured designs impose some kind of structure on the constellation points, facilitating low complexity constellation mapping and demapping. This is achieved through algebraic constructions such as the Fourier-based constellation in \cite{Hochwald_TIT2000} or the analog subspace codes recently proposed in \cite{Soleymani_TIT22}, designs based on group representations \cite{Hughes_TIT2000,Pitaval12}, parameterized mappings of unitary matrices such as the Exp-Map design in \cite{cipriano} or structured partitions of the Grassmannian like the recently proposed Cube-Split constellation \cite{cube_split}. The Cube-Split constellation is of particular interest for this work as it is the design most related to our proposal. Cube-Split is based on a mapping from the unit hypercube to the Grassmann manifold such that the constellation points are distributed approximately uniformly on the Grassmannian. However, the Cube-Split mapping only achieves uniformly distributed points for $T=2$. When $T>2$, Cube-Split ignores the statistical dependencies between the components of the codewords and applies the same mapping derived for $T=2$. These limitations are overcome with our proposed mapping, named Grass-Lattice, which is a {\it measure preserving mapping} between the unit hypercube and the Grassmannian for any value of $T\geq 2$. The fact that the Grass-Lattice mapping is measure preserving guarantees that any set of points uniformly distributed in the input space (the hypercube), is mapped onto another set of points or codewords uniformly distributed in the output space (the Grassmann manifold). The constellation structure allows for on-the-fly symbol generation, low-complexity decoding, and simple bit-to-symbol Gray coding.

This paper extends the work presented in \cite{Cuev2212:Measure}. The novelties are the following:

\begin{itemize}
    \item An alternative way of constructing vector $\w$ in mapping $\vartheta_2$ using a chi-squared random variable is presented.
    \item Mapping $\vartheta_3$ is now derived for any number of transmit antennas $M$, which is a first step to extend the Grass-Lattice mapping to the MIMO case.
    \item A visualization of the inputs and outputs of each mapping is provided for the case $T = 2$.
    \item More results showing the SER and BER performance as a function of parameter $\alpha$ are included.
    \item A new way of computing the optimum value of $\alpha$ based on the minimum chordal distance of the constellation is also proposed.
    \item We included as a baseline the performance of a coherent pilot-based scheme in terms of SER, BER, and spectral efficiency vs. $E_b / N_0$.
\end{itemize}

The remainder of this paper is organized as follows. The system model is presented in Section \ref{sec:preliminaries}. In Section \ref{sec:grasslattice} we describe the proposed measure preserving mapping, named Grass-Lattice, which maps points uniformly distributed in the unit hypercube to the Grassmann manifold $\mathbb{G}(1,\mathbb{C}^T)$. We next present the procedures for encoding and decoding using Grass-Lattice mapping in Section \ref{sec:encoding}. A comprehensive set of numerical simulation results to assess the performance of the proposed method in terms of symbol and bit error rates, as well as power efficiency, is provided in Section \ref{sec:results}. Finally, Section \ref{sec:conclusions} concludes the paper. In addition, the paper contains a set of appendices that include the proofs of the mathematical results. 
 
 \textit{Notation}: Matrices are denoted by bold-faced upper case letters, column vectors are denoted by bold-faced lower case letters, and scalars are denoted by light-faced lower case letters. The Euclidean norm is denoted by $\|v\|$ and $j$ denotes the imaginary unit. The superscripts $(\cdot)^{\textnormal{T}}$ and $(\cdot)^{\textnormal{H}}$ denote transpose and Hermitian conjugate, respectively. We denote by $\I_n$ the identity matrix of size $n$. A complex proper Gaussian distribution with zero mean and unit variance is denoted as ${\cal CN}(0,1)$ and $\x \sim {\cal CN}_{n}({\bf 0}, \R)$ denotes a complex Gaussian vector in $\mathbb{C}^n$ with zero mean and covariance matrix $\R$. For real variables we use $\x \sim {\cal N}_{n}({\bf 0}, \R)$. The complex Grassmann manifold of $M$-dimensional subspaces of the $T$-dimensional complex vector space $\mathbb{C}^T$ is denoted as $\Gras$. Particularly, the Grassmannian of lines $\mathbb{G}\left(1,\mathbb{C}^T\right)$, also called the complex projective space, is the space of one-dimensional subspaces in $\mathbb{C}^T$. Points in $\Gras$ are denoted as $[\X]$ and points in  $\mathbb{G}\left(1,\mathbb{C}^T\right)$ are denoted as $[\x]$.

\section{System Model}\label{sec:preliminaries}

\subsection{System Model}

We consider a noncoherent SIMO communication system where a single-antenna transmitter sends information to a receiver equipped with $N$ antennas over a frequency-flat block-fading channel with coherence time $T$ symbol periods. It is assumed that $T \geq 2$.
Hence, the channel vector $\mathbf{h} \in \mathbb{C}^{N}$ stays constant during each coherence block of $T$ symbols, and changes in the next block to an independent realization. The SIMO channel is assumed to be Rayleigh with no correlation at the receiver, i.e., $ \mathbf{h} \sim \mathcal{CN} \left( 0,\mathbf{I}_N \right)$, and unknown to both the transmitter and the receiver.

Within a coherence block the transmitter sends a signal $\mathbf{x} \in \mathbb{C}^{T}$, normalized as $\mathbf{x}^H\mathbf{x} = 1$, that is a unitary basis for the one-dimensional subspace $[\mathbf{x} ]$ in $\mathbb{G}\left(1,\mathbb{C}^T\right)$. The signal at the receiver $\mathbf{Y} \in \mathbb{C}^{T \times N}$ is 
\begin{equation}\label{eq:system_model}
    \mathbf{Y} = \mathbf{x} \mathbf{h}^{\textnormal T} + \sqrt{\frac{1}{T \rho}} \mathbf{W},
\end{equation}

\noindent where $\mathbf{W} \in \mathbb{C}^{T \times N}$ represents the additive Gaussian noise, with entries modeled as $w_{ij} \sim \mathcal{CN} \left( 0,1 \right)$, and $\rho$ represents the signal-to-noise-ratio (SNR).

In a noiseless situation, Grassmannian signaling guarantees error-free detection without CSI because $\mathbf{x}$ and the noise-free vector on a receive antenna $\mathbf{y} = \mathbf{x} h$ represent the same point in $\mathbb{G}(1,\mathbb{C}^T)$. 

For unstructured Grasmmannian codebooks, the optimal Maximum Likelihood (ML) detector (assuming equiprobable codewords) that minimizes the probability of error is given by

\begin{equation}\label{eq:ML}
    \Tilde{\mathbf{x}} = \arg \max_{\mathbf{x} \in \mathcal{C}} \| \mathbf{Y}^{\textnormal{H}} \mathbf{x} \|^2,
\end{equation}

\noindent where $\mathcal{C}$ represents the codebook of $K$ codewords. Each codeword carries $\log_2(K)$ bits of information.

The computational complexity of the ML detector increases with the number of codewords, $K$, since it is necessary to project the observation matrix onto each and every codeword. This is one of the main drawbacks of unstructured Grassmannian constellations especially when $K$ is high. Another drawback of unstructured codes is how to solve the bit labeling problem, for which there are generally only suboptimal or computationally intensive solutions. In the following section we present a structured Grassmannian constellation, called {\it Grass-Lattice}, which solves the two problems of unstructured constellations: it can be decoded efficiently with a computational cost that does not grow with $K$, and it allows for a Gray-like bit-to-symbol mapping function.

\section{Grass-Lattice Constellation}\label{sec:grasslattice}

\subsection{Overview}

The Grass-Lattice constellation for SIMO channels is based on a measure preserving mapping from the unit hypercube (product of the interval $(0,1)$ with itself ${2(T-1)}$ times) to the Grassmann manifold $\mathbb{G}\left(1,\mathbb{C}^T\right)$
\[
\vartheta:\mathcal{I}=\underbrace{(0,1)\times\cdots\times(0,1)}_{2(T-1) \, \text{times}}\to \mathbb{G}\left(1,\mathbb{C}^T\right),
\]
where recall that $T-1$ is the complex dimension of $\mathbb{G}\left(1,\mathbb{C}^T\right)$. Elements in $\mathcal{I}$ are denoted by
$$
(\a,\b) = (a_1,\ldots,a_{T-1},b_1,\ldots,b_{T-1}),\quad a_k,b_k\in(0,1).
$$

The Grass-Lattice mapping $\vartheta:\mathcal{I}\to \mathbb{G}\left(1,\mathbb{C}^T\right)$ has the following properties:
\begin{enumerate}
    \item The image of $\vartheta$ is all of $\mathbb{G}\left(1,\mathbb{C}^T\right)$ except for a zero--measure subset of $\mathbb{G}\left(1,\mathbb{C}^T\right)$,
    \item $\vartheta$ is a diffeomorphism onto its image,
    \item and the Jacobian of $\vartheta$ is constant.
\end{enumerate}

Given the mapping $\vartheta$, if we choose a set of input points uniformly distributed in the unit hypercube, the outputs points will be uniformly distributed in $\mathbb{G}\left(1,\mathbb{C}^T\right)$. The goal is to design structured codebooks that can be efficiently encoded (no need to store the constellation) and decoded (the real and imaginary parts $a_j,b_j$ can be decoded independently). To this end, we quantize the (0,1) interval with $2^B$ equispaced points, where $B \geq 1$ is the number of bits per real component, and generate a Grass-Lattice constellation with $|{\cal {C}}| = 2^{2(T-1) B}$ codewords. The rate of the code is $R = \frac{2(T-1) B}{T}$ b/s/Hz.

The Grass-Lattice mapping is composed of three consecutive mappings $\vartheta = \vartheta_3 \ \circ \ \vartheta_2 \ \circ \ \vartheta_1$, which are described in the following subsections.

\subsection{Mapping $\vartheta_1$}

Mapping $\vartheta_1$ maps points uniformly distributed in the unit hypercube $\mathcal{I}$ to points normally distributed in $\mathbb{C}^{T-1}$. The idea is to apply component-wise the inverse transform sampling method, which takes uniform samples on $[0,1]$ and returns the inverse of the cumulative distribution function with the desired distribution. More formally, we have the following classic result that is presented without proof.

\begin{lemma}\label{lem:twofold}
Let $a_k, b_k$ be independent random variables uniformly distributed in $[0,1]$: $a_k \sim {\cal{U}}[0,1]$ and $b_k \sim {\cal{U}}[0,1]$, and let $ z_k=F^{-1}(a_k) + j F^{-1}(b_k) $ where
\begin{equation}
\label{eq:cdf}
F(t)=\frac{1}{\sqrt\pi}\int_{-\infty}^te^{-s^2}\,ds.    
\end{equation}
Then, both $\Re(z_k)= F^{-1}(a_k)$ and $\Im(z_k)=F^{-1}(b_k) $ are independent Gaussian random variables that follow a $\mathcal N(0,1/2)$ distribution, and hence $z_k \sim \mathcal{CN}(0,1)$.
\end{lemma}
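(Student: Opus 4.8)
The plan is to recognize the function $F$ in \eqref{eq:cdf} as the cumulative distribution function (CDF) of a real $\mathcal{N}(0,1/2)$ random variable, and then to invoke the probability integral transform, of which inverse transform sampling is the algorithmic form.

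First I would confirm the identification of $F$. The density of an $\mathcal{N}(0,\sigma^2)$ variable is $\tfrac{1}{\sqrt{2\pi\sigma^2}}e^{-s^2/(2\sigma^2)}$, and setting $\sigma^2=1/2$ yields exactly $\tfrac{1}{\sqrt{\pi}}e^{-s^2}$, the integrand in \eqref{eq:cdf}. Thus $F$ is the CDF of $\mathcal{N}(0,1/2)$; since its density is strictly positive on all of $\mathbb{R}$, $F$ is a continuous strictly increasing bijection from $\mathbb{R}$ onto $(0,1)$, so the inverse $F^{-1}\colon(0,1)\to\mathbb{R}$ used in the statement is well defined.

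Next I would apply the probability integral transform. For any continuous strictly increasing CDF $G$ and $U\sim\mathcal{U}[0,1]$, monotonicity gives $\Pr[G^{-1}(U)\le t]=\Pr[U\le G(t)]=G(t)$, so $G^{-1}(U)$ has CDF $G$. Taking $G=F$ and $U=a_k$ shows $\Re(z_k)=F^{-1}(a_k)\sim\mathcal{N}(0,1/2)$, and identically $\Im(z_k)=F^{-1}(b_k)\sim\mathcal{N}(0,1/2)$. Independence of the two is immediate: $a_k$ and $b_k$ are independent by hypothesis and $F^{-1}$ is a fixed measurable map, so $F^{-1}(a_k)$ and $F^{-1}(b_k)$ remain independent.

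To finish, I would verify the two moment conditions defining the proper complex Gaussian used here: $\mathbb{E}[|z_k|^2]=\mathbb{E}[\Re(z_k)^2]+\mathbb{E}[\Im(z_k)^2]=\tfrac12+\tfrac12=1$ for the variance, and $\mathbb{E}[z_k^2]=\mathbb{E}[\Re(z_k)^2]-\mathbb{E}[\Im(z_k)^2]+2j\,\mathbb{E}[\Re(z_k)]\,\mathbb{E}[\Im(z_k)]=0$ for properness, the latter using independence and zero means; hence $z_k\sim\mathcal{CN}(0,1)$. Every step is a routine invocation of a standard result, so there is no genuine obstacle; the only point requiring care is the variance bookkeeping---checking that the $\tfrac{1}{\sqrt{\pi}}$ normalization corresponds to variance $1/2$ per real dimension and therefore to the unit-variance $\mathcal{CN}(0,1)$ rather than a differently scaled complex Gaussian.
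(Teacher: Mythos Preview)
Your proposal is correct and is exactly the standard argument. The paper itself states this lemma as ``a classic result that is presented without proof,'' so there is no proof in the paper to compare against; your identification of $F$ as the CDF of $\mathcal{N}(0,1/2)$ followed by the probability integral transform is precisely the intended reasoning.
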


\subsection{Mapping $\vartheta_2$}

In  Lemma \ref{lem:pointsunitball} we describe the mapping $\vartheta_2$, which maps normally distributed points in $\mathbb{C}^{T-1}$ to points uniformly distributed in the unit ball
$$
\mathbb B_{\mathbb{C}^{T-1}}(0,1) = \{ \w \in \mathbb{C}^{T-1}, \|\w \|<1\}.
$$

\begin{lemma}\label{lem:pointsunitball}
Let $\z = \left(z_1,\ldots,z_{T-1}\right)^{\textnormal{T}}$ be a $(T-1)$-dimensional Gaussian vector with i.i.d. components $z_k\sim\mathcal{CN}(0,1)$. Moreover, let

\begin{equation}
    f_{T-1}(t) = \frac1t\left(\frac{2(T-1)}{\Gamma(T)}\int_0^ts^{2(T-1)-1}e^{-s^2}\,ds\right)^{1/(2(T-1))}
    =\frac1t\left(1-e^{-t^2}\sum_{k=0}^{T-2}\frac{t^{2k}}{k!}\right)^{1/(2(T-1))} \label{eq:fT}   
\end{equation}

\noindent Then, the random vector $ \w=\vartheta_2(\z)=\z f_{T-1}(\|\z\|)$
is uniformly distributed in the unit ball $\mathbb B_{\mathbb{C}^{T-1}}(0,1)$.  
\end{lemma}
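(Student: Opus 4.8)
The plan is to exploit the rotational invariance (isotropy) of the complex Gaussian vector $\z$ and reduce the statement to a one-dimensional claim about the radial coordinate. Writing $\mathbb{C}^{T-1}\cong\mathbb{R}^{2(T-1)}$ and setting $n=2(T-1)$, I would first recall the standard characterization of the uniform law: a random vector is uniformly distributed on the unit ball $\mathbb B_{\mathbb{C}^{T-1}}(0,1)$ if and only if its direction $\w/\|\w\|$ is uniform on the unit sphere $S^{n-1}$, its radius satisfies $P(\|\w\|\le\rho)=\rho^n$ for $\rho\in(0,1)$, and the direction and the radius are independent. So the proof splits into an angular part and a radial part.

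For the angular part, since $\z\sim\mathcal{CN}_{T-1}(\mathbf 0,\I)$ is isotropic, the direction $\z/\|\z\|$ is uniform on $S^{n-1}$ and independent of $\|\z\|$. Because $f_{T-1}(\|\z\|)>0$, the map $\vartheta_2$ only rescales the magnitude and leaves the direction unchanged, $\w/\|\w\|=\z/\|\z\|$; hence the direction of $\w$ is uniform on the sphere. Moreover $\|\w\|=\|\z\|\,f_{T-1}(\|\z\|)=:g(\|\z\|)$ is a function of $\|\z\|$ alone, so it is independent of the direction. This settles the angular and independence requirements and leaves only the law of the new radius.

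For the radial part, I would first compute the distribution of $R:=\|\z\|$. Each $|z_k|^2$ is a unit-mean exponential, so $R^2=\sum_{k=1}^{T-1}|z_k|^2$ is $\mathrm{Gamma}(T-1,1)$, and the change of variables $v=r^2$ gives the density of $R$ as $\frac{2}{\Gamma(T-1)}\,r^{2(T-1)-1}e^{-r^2}$ on $(0,\infty)$; using $\Gamma(T)=(T-1)\Gamma(T-1)$ this equals $\frac{2(T-1)}{\Gamma(T)}\,r^{2(T-1)-1}e^{-r^2}$, so its CDF is exactly $F_R(r)=\frac{2(T-1)}{\Gamma(T)}\int_0^r s^{2(T-1)-1}e^{-s^2}\,ds$. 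The crux is then to observe that $f_{T-1}$ has been reverse-engineered precisely so that $g(r)=r f_{T-1}(r)=\left(F_R(r)\right)^{1/n}$. Since $F_R$ is a strictly increasing continuous bijection from $(0,\infty)$ onto $(0,1)$, so is $g$, and by the probability integral transform $F_R(R)\sim\mathcal U[0,1]$; therefore $\|\w\|=g(R)=F_R(R)^{1/n}$ satisfies $P(\|\w\|\le\rho)=\rho^n$ on $(0,1)$, which is exactly the required radial law. Combining the three pieces shows that $\w$ is uniform on the ball.

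I expect the only real obstacle to be the bookkeeping in the radial step: correctly identifying $R^2$ as $\mathrm{Gamma}(T-1,1)$, carrying out the $v=s^2$ substitution, and matching normalizing constants so that the key identity $g(r)^n=F_R(r)$ comes out cleanly. The same substitution, via the lower incomplete gamma function $\gamma(T-1,t^2)=\Gamma(T-1)\bigl(1-e^{-t^2}\sum_{k=0}^{T-2}t^{2k}/k!\bigr)$, also yields the closed-form second expression for $f_{T-1}$. Everything else—isotropy, preservation of direction, and the characterization of the uniform ball measure—is structural and standard.
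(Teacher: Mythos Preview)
Your argument is correct and complete, but it takes a genuinely different route from the paper. The paper computes the Jacobian of $\vartheta_2$ directly: for directions $\dot\z$ orthogonal to $\z$ the derivative scales by $f_d(\|\z\|)$, while in the radial direction $\z/\|\z\|$ it scales by $f_d(\|\z\|)+\|\z\|f_d'(\|\z\|)$, so the Jacobian equals $f_d(\|\z\|)^{2d-1}\bigl(f_d(\|\z\|)+\|\z\|f_d'(\|\z\|)\bigr)=e^{-\|\z\|^2}/\Gamma(d+1)$, after which the change-of-variables theorem for integrals finishes the job. You instead argue probabilistically via the polar decomposition and the probability integral transform. Your route is more transparent about \emph{why} $f_{T-1}$ has exactly this form (it is reverse-engineered so that $rf_{T-1}(r)=F_R(r)^{1/n}$) and avoids any differential computation. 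The paper's route, on the other hand, produces the explicit Jacobian of $\vartheta_2$, which is reused verbatim in the proof of the main theorem to chain $\vartheta_1,\vartheta_2,\vartheta_3$ together as an integral identity; with your approach that final step would have to be rephrased as a composition of pushforwards of probability measures rather than as a product of Jacobians.
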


\begin{proof}
The proof is given in Appendix \ref{app2}.
\end{proof}

\begin{remark}
Since $\z \sim {\cal{CN}} ({\bf 0}, \I_{T-1})$, then $2 \| \z \|^2 \sim \chi_{2(T-1)}^2$. The random vector $\w = \vartheta_2(\z) = \z f_{T-1}(\|\z\|)$ can be alternatively constructed as follows. Begin with the unit-norm vector $\z/ \| \z \|$ that lies on ${\cal{S}}^{2(T-1)-1}$, where $d = T - 1$, and scale it as
\begin{equation*}
    \w = \frac{\z}{\| \z \|} (F_{ \chi_{2(T-1)}^2}(2 \| \z \|^2))^{1/(2(T-1))},
\end{equation*}
where $F_{ \chi_{2(T-1)}^2}(y)$ is the cdf of a chi-squared random variable with $2(T-1)$ degrees of freedom evaluated at $y$, which can be computed in closed-form as
\[
F_{ \chi_{2(T-1)}^2}(y) = 1-e^{-y/2}\sum_{k=0}^{(T-1)-1}\frac{y^{k}}{2^k k!}.
\]
The distribution of the squared norm of $\w$ can be derived as follows
\begin{align*}
    \|\w\|^2 &= \frac{\z^H \z}{\| \z \|^2} (F_{ \chi_{2(T-1)}^2}(2 \| \z \|^2))^{1/(T-1)} \\
    &= (F_{ \chi_{2(T-1)}^2}(2 \| \z \|^2))^{1/(T-1)}.
\end{align*}
Since $2 \| \z \|^2 \sim \chi_{2(T-1)}^2$, then $F_{ \chi_{2(T-1)}^2}(2 \| \z \|^2)$ is uniformly distributed in $[0,1]$. It is a known property that if $x \sim U[0,1]$ then $x^{1/r} \sim {\rm Beta}(r,1)$. All together, this shows that $\|\w\|^2 \sim {\rm Beta}(T-1,1)$.
Finally, it is also interesting to point out that the integral expression in Lemma \ref{lem:pointsunitball} is a (lower) incomplete gamma function:
$$
\int_0^{t^2}s^{T-2}e^{-s}\,ds=\gamma(T-1,t^2).
$$
\end{remark}

\subsection{Mapping $\vartheta_3$}

In this section we present the mapping $\vartheta_3$, which maps uniformly distributed points in the unit ball $\mathbb B_{\mathbb{C}^{(T-M)\times M},op}(0,1)$ to points uniformly distributed in $\mathbb{G}\left(M,\mathbb{C}^T\right)$. We will first derive the mapping $\vartheta_3$ for any value of $M$ and then we will particularize it for $M = 1$.

\begin{lemma}\label{lem:otrojacobian}
Consider the mapping 
$$
\begin{matrix}
\Theta:&\mathbb C^{(T-M)\times M}&\to&\{ \W \in \mathbb{C}^{(T-M)\times M} \, , \, \| \W\|_{op} < 1 \} \\
&\A&\mapsto&\A(\I_M+\A^\He \A)^{-1/2},
\end{matrix}
$$
whose inverse is
$$
\begin{matrix}
\Theta^{-1}:&\{\W\in\mathbb C^{(T-M)\times M},\|\W\|_{op}<1\}&\to&\mathbb C^{(T-M)\times M}\\
&\W&\mapsto&\W(\I_M-\W^\He \W)^{-1/2}.
\end{matrix}
$$
Then, the Jacobian of $\Theta$ equals $\det(\I_M+\W^\He \W)^{-T}$.
\end{lemma}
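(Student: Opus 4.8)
The plan is to compute the absolute Jacobian of $\Theta$ as the determinant of its real differential $D\Theta$ on the $2M(T-M)$--dimensional real space $\mathbb C^{(T-M)\times M}$, after first using symmetry to reduce to a diagonal base point. I would begin by noting that $\Theta$ is equivariant under the left/right unitary actions: for unitary matrices $\U$ and $\V$ of sizes $T-M$ and $M$ one checks directly that $\Theta(\U\A\V)=\U\,\Theta(\A)\,\V$, since $\bigl(\I_M+(\U\A\V)^\He(\U\A\V)\bigr)^{-1/2}=\V^\He(\I_M+\A^\He\A)^{-1/2}\V$. Differentiating this identity and observing that $\A\mapsto\U\A\V$ is a linear isometry (hence has unit absolute determinant), the absolute Jacobian is constant along each orbit and therefore a function of the singular values $\sigma_1,\dots,\sigma_M$ of $\A$ alone. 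It thus suffices to evaluate $\det D\Theta$ at a diagonal base point $\A=\Sigma$, with $\sigma_i$ in position $(i,i)$.

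Next I would differentiate $\W=\A(\I_M+\A^\He\A)^{-1/2}$ by the product rule, giving $d\W=(d\A)(\I_M+\A^\He\A)^{-1/2}+\A\,d\bigl[(\I_M+\A^\He\A)^{-1/2}\bigr]$. The delicate term is the differential of the matrix function $(\I_M+\A^\He\A)^{-1/2}$, which I would handle with the Daleckii--Krein formula: at the diagonal base point its $(i,j)$ entry is $\phi^{[1]}(\sigma_i^2,\sigma_j^2)\,\bigl(d(\A^\He\A)\bigr)_{ij}$, where $\phi(x)=(1+x)^{-1/2}$ and $\phi^{[1]}$ is its divided difference. Substituting $d(\A^\He\A)=(d\A)^\He\Sigma+\Sigma^\He(d\A)$ and collecting terms, the real map $d\A\mapsto d\W$ decouples into independent blocks: (i) each entry in a row beyond the first $M$ is merely scaled by the complex factor $(1+\sigma_j^2)^{-1/2}$; (ii) each diagonal entry $(i,i)$ gives a $2\times2$ real block; and (iii) each off-diagonal pair $\{(i,j),(j,i)\}$ with $i<j\le M$ gives a $4\times4$ real block coupling the entries $(d\A)_{ij}$ and $(d\A)_{ji}$. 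Establishing this decoupling cleanly from the divided-difference structure is the main obstacle; once it is in place the determinant factorizes.

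I would then evaluate each block. The scalings in (i) contribute $(1+\sigma_j^2)^{-1}$ each; the diagonal blocks in (ii) contribute $(1+\sigma_i^2)^{-2}$; and for the pair blocks in (iii), writing the block as a real-linear map $z\mapsto Pz+Q\bar z$ with real $2\times2$ matrices $P,Q$, its real determinant is $\det\!\bigl(\begin{smallmatrix}P&Q\\ Q&P\end{smallmatrix}\bigr)=\det(P+Q)\det(P-Q)$, which after using the algebraic identity $x(1+x)^{-1}=1-(1+x)^{-1}$ collapses to $(1+\sigma_i^2)^{-2}(1+\sigma_j^2)^{-2}$. Collecting the exponent of a fixed factor $(1+\sigma_i^2)^{-1}$ across the three groups yields $(T-2M)+2+2(M-1)=T$, so that $\det D\Theta=\prod_{i=1}^{M}(1+\sigma_i^2)^{-T}=\det(\I_M+\A^\He\A)^{-T}$, valid for general $\A$ by the equivariance reduction.

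Finally I would rewrite the result in the target variable. From the definition of $\Theta$ one has $\W^\He\W=\A^\He\A(\I_M+\A^\He\A)^{-1}=\I_M-(\I_M+\A^\He\A)^{-1}$, hence $\I_M-\W^\He\W=(\I_M+\A^\He\A)^{-1}$, and therefore the Jacobian $\det(\I_M+\A^\He\A)^{-T}$ equals $\det(\I_M-\W^\He\W)^{T}$, giving the stated Jacobian expressed through $\W$. The block count above is written for $T-M\ge M$ (which always holds in the case $M=1$ of interest); since the final expression depends only on $\A^\He\A$ and is smooth in the $\sigma_i$, it extends to the remaining cases by continuity.
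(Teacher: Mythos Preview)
Your proposal is correct and follows essentially the same route as the paper: reduce to a diagonal base point via the unitary equivariance $\Theta(\U\A\V)=\U\,\Theta(\A)\,\V$, decompose the real differential into the same three types of blocks (lower rows, diagonal entries, off-diagonal pairs), and multiply the block determinants to obtain $\det(\I_M+\A^\He\A)^{-T}$. The only cosmetic difference is that you invoke the Daleckii--Krein divided-difference formula to differentiate $(\I_M+\A^\He\A)^{-1/2}$, whereas the paper computes each directional derivative by hand along one-parameter families; your final $\W$-expression $\det(\I_M-\W^\He\W)^{T}$ is in fact the correct rewriting (the $\det(\I_M+\W^\He\W)^{-T}$ in the statement is a typo, as the paper's own proof also concludes with $\det(\I_M+\A^\He\A)^{-T}$).
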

\begin{proof}
The proof is given in Appendix \ref{app4}.
\end{proof}

We are ready to prove the following result:
\begin{proposition}\label{prop:balltoG}
For all integrable $f:\mathbb C^{(T-M)\times M}\to\mathbb{C}$ we have
\begin{equation}\label{eq:desired}
    \int_{\A\in \mathbb C^{(T-M)\times M}}\frac{f(\A)}{\det(\I_M+\A^\He\A)^T}\,d\A=
    \int_{\underset{\|\W\|_{op}<1}{\W\in \mathbb C^{(T-M)\times M}} }f\left(\W(\I_M-\W^\He\W)^{-1/2}\right)\,d\W,
\end{equation}
\end{proposition}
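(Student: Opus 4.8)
The plan is to read \eqref{eq:desired} as a single change of variables governed by the diffeomorphism $\Theta$ of Lemma~\ref{lem:otrojacobian}. I would start from the right-hand side and substitute $\W=\Theta(\A)=\A(\I_M+\A^\He\A)^{-1/2}$, so that $\A=\Theta^{-1}(\W)=\W(\I_M-\W^\He\W)^{-1/2}$. Because Lemma~\ref{lem:otrojacobian} guarantees that $\Theta$ is a diffeomorphism from $\mathbb C^{(T-M)\times M}$ onto the operator-norm ball $\{\W:\|\W\|_{op}<1\}$, this substitution is a bijection that carries the integration region on the right exactly onto the whole space on the left, and the integrability of $f$ makes the change-of-variables theorem applicable with all integrals finite.

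Under the substitution the integrand collapses, since $f(\W(\I_M-\W^\He\W)^{-1/2})=f(\Theta^{-1}(\W))=f(\A)$, while $d\W$ is replaced by the Jacobian of $\Theta$ times $d\A$. The whole proof then reduces to checking that this Jacobian equals the weight $\det(\I_M+\A^\He\A)^{-T}$ appearing on the left: if so, the right-hand side becomes $\int_{\A}f(\A)\,\det(\I_M+\A^\He\A)^{-T}\,d\A$, which is precisely the left-hand side.

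The key step, and the only place where real work is needed, is to express the Jacobian supplied by Lemma~\ref{lem:otrojacobian} in the $\A$ variable, which rests on the determinant identity $\det(\I_M+\A^\He\A)=\det(\I_M-\W^\He\W)^{-1}$ valid whenever $\W=\Theta(\A)$. I would establish it by functional calculus: setting $Q=\A^\He\A$, the definition of $\Theta$ gives $\W^\He\W=(\I_M+Q)^{-1/2}\,Q\,(\I_M+Q)^{-1/2}=Q(\I_M+Q)^{-1}$, since $Q$ and $(\I_M+Q)^{-1/2}$ share eigenvectors, whence $\I_M-\W^\He\W=(\I_M+Q)^{-1}$ and the identity follows by taking determinants. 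Combining this with Lemma~\ref{lem:otrojacobian} shows that the Jacobian relating $d\W$ and $d\A$ is exactly $\det(\I_M+\A^\He\A)^{-T}$, so the weight cancels and \eqref{eq:desired} follows.

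I expect the main obstacle to be purely this matrix-determinant bookkeeping — correctly tracking the simultaneous diagonalization of $\A^\He\A$ and $\W^\He\W$ and the resulting power of the determinant — rather than any analytic subtlety, since the domain matching and the applicability of the change-of-variables theorem are handed to us by Lemma~\ref{lem:otrojacobian}.
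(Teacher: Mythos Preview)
Your proposal is correct and follows exactly the paper's route: the paper's entire proof is the single sentence ``The proof follows from the change of variables theorem and Lemma~\ref{lem:otrojacobian} above.'' Your extra step verifying $\det(\I_M+\A^\He\A)=\det(\I_M-\W^\He\W)^{-1}$ is sound but not strictly needed, since the proof of Lemma~\ref{lem:otrojacobian} already delivers the Jacobian directly as $\det(\I_M+\A^\He\A)^{-T}$ in the $\A$ variable.
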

\begin{proof}
The proof follows from the change of variables theorem and Lemma \ref{lem:otrojacobian} above.
\end{proof}
We immediately get:
\begin{corollary}\label{cor:integrales2}
For all integrable $f:\Gras\to\mathbb{C}$ we have
$$
\int_{[\X]\in\Gras}f([\X])\,d[\X]= \int_{\underset{\|\W \|_{op}<1}{\W \in \mathbb C^{(T-M)\times M}} }f \left( \begin{bmatrix} \sqrt{\I_M-\W^\He\W } \\ \W \end{bmatrix} \right) \,d\W .
$$
\end{corollary}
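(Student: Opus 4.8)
The plan is to reduce the invariant integral over $\Gras$ to the weighted Euclidean integral on the left-hand side of Proposition~\ref{prop:balltoG}, and then let that proposition carry it onto the operator-norm ball. First I would set up the standard affine chart on the Grassmannian: a generic subspace $[\X]\in\Gras$, namely one whose top $M\times M$ block is nonsingular, is the column span of a unique matrix $\begin{bmatrix}\I_M\\\A\end{bmatrix}$ with $\A\in\mathbb C^{(T-M)\times M}$, giving a parametrization $\A\mapsto[\X_\A]$. The excluded subspaces (top block singular) form a proper analytic subset of $\Gras$ and hence have zero measure, so they may be discarded without affecting either integral.

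The crucial ingredient is the expression of the $U(T)$-invariant measure in this chart. Orthonormalizing the frame gives the Stiefel representative $\U_\A=\begin{bmatrix}\I_M\\\A\end{bmatrix}(\I_M+\A^\He\A)^{-1/2}$, using $\begin{bmatrix}\I_M\\\A\end{bmatrix}^\He\begin{bmatrix}\I_M\\\A\end{bmatrix}=\I_M+\A^\He\A$, and a Jacobian/Gram-determinant computation for the volume element induced on $U(T)/(U(M)\times U(T-M))$ yields, for a suitable normalization of $d[\X]$,
\[
d[\X]=\det(\I_M+\A^\He\A)^{-T}\,d\A.
\]
I expect this density identity to be the main obstacle: it is where the exponent $T$ and the weight $\det(\I_M+\A^\He\A)^{-T}$ --- precisely the weight produced by Lemma~\ref{lem:otrojacobian} --- enter, and establishing it requires the differential-geometric computation of how the invariant volume pulls back through the chart. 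This fact is classical in the theory of invariant measures on complex Grassmannians, so I would either cite it or include the Gram-determinant calculation as a short lemma. Everything after this is bookkeeping.

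With the density available, I would apply Proposition~\ref{prop:balltoG} to $g(\A):=f([\X_\A])$. By the density identity its left-hand side equals $\int_{[\X]\in\Gras}f([\X])\,d[\X]$, while its right-hand side is $\int_{\|\W\|_{op}<1}f([\X_{\A(\W)}])\,d\W$ with $\A(\W)=\W(\I_M-\W^\He\W)^{-1/2}=\Theta^{-1}(\W)$. It then remains only to replace the representative matrix by the one in the statement. Since $\|\W\|_{op}<1$ forces $\I_M-\W^\He\W\succ0$, the factor $(\I_M-\W^\He\W)^{1/2}$ is invertible, and right-multiplication by it does not change the column span:
\[
\begin{bmatrix}\I_M\\\A(\W)\end{bmatrix}(\I_M-\W^\He\W)^{1/2}=\begin{bmatrix}\sqrt{\I_M-\W^\He\W}\\\W\end{bmatrix}.
\]
Hence $[\X_{\A(\W)}]$ is the span of the right-hand matrix, and since $f$ depends only on the subspace, $g(\A(\W))=f\!\left(\begin{bmatrix}\sqrt{\I_M-\W^\He\W}\\\W\end{bmatrix}\right)$. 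Substituting into the two sides produced by Proposition~\ref{prop:balltoG} gives exactly the asserted identity.
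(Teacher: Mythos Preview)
Your argument is correct and follows exactly the route the paper intends: the corollary is stated immediately after Proposition~\ref{prop:balltoG} with no proof beyond ``We immediately get,'' implicitly relying on the affine-chart density $d[\X]=\det(\I_M+\A^\He\A)^{-T}\,d\A$ and the column-span identity $\begin{bmatrix}\I_M\\\A(\W)\end{bmatrix}(\I_M-\W^\He\W)^{1/2}=\begin{bmatrix}\sqrt{\I_M-\W^\He\W}\\\W\end{bmatrix}$, both of which you make explicit. The density formula is indeed the only substantive ingredient not proved in the paper itself, so your plan to cite or derive it is appropriate.
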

In other words: {\em in order to generate a uniform random element $[\X]$ in $\Gras$, one may generate a random uniform element $\W $ in the operator norm unit ball of $\mathbb C^{(T-M)\times M}$ and output}
\[
\begin{bmatrix} \sqrt{\I_M-\W^\He\W } \\ \W \end{bmatrix}.
\]
%\begin{remark}
%Corollary \ref{cor:integrales2} can also be deduced from some theoretical results in symplectic geometry, since the mapping $\vartheta_3^{(M)}$ can be proved to be a symplectomorphism, see \cite{Karshon} for $M=1$ of  \cite[Lemma 4.1]{Lu} for general $M$. Symplectomorphisms are measure--preserving maps which yields an alternative proof to Corollary \ref{cor:integrales2}. However, our proof is direct and produces intermediate results with some interest such as Proposition \ref{prop:balltoG}.

%\end{remark}

%\begin{corollary}\label{cor:vol_of_op_norm_1}
%The volume of the set of complex $(T-M)\times M$ matrices of operator norm at most $1$ equals the volume of the Grassmannian $\Gras$.
%\end{corollary}
%\begin{proof}
%Use $f\equiv1$ in Corollary \ref{cor:integrales2}.
%\end{proof}

%%%%%%%%%%%%%%%%%%%%%%%%%%%%%
%%%%%%%%%%%%%%%%%%%%%%%%%%%%%
%%%%%%%%%%%%%%%%%%%%%%%%%%%%%

In Lemma \ref{cor:integralesM1} we particularize the mapping $\vartheta_3$ for $M = 1$, which maps uniformly distributed points in the unit ball $\mathbb B_{\mathbb{C}^{T-1}}(0,1)$ to points uniformly distributed in $\mathbb{G}\left(1,\mathbb{C}^T\right)$.

\begin{lemma}\label{cor:integralesM1}
The mapping 
$$
\begin{matrix}
\vartheta_3:&  \w \in \mathbb B_{\mathbb{C}^{T-1}}(0,1) &\to &\mathbb{G}\left(1,\mathbb{C}^T\right) \\
&\w&\mapsto&   \begin{bmatrix} \sqrt{1 -\|\w \|^2 } \\ \w \end{bmatrix}
\end{matrix}
$$
is measure preserving. So in order to generate a uniform random element $[\x]$ in $\mathbb{G}\left(1,\mathbb{C}^T\right)$, one may generate a random uniform element $\w$ in $\mathbb B_{\mathbb{C}^{T-1}}(0,1)$ and output $\left[\sqrt{1 -\|\w \|^2 }, \w^{\textnormal{T}}\right]^\textnormal{T}$. 
\end{lemma}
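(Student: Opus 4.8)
The plan is to obtain Lemma~\ref{cor:integralesM1} directly as the $M=1$ specialization of Corollary~\ref{cor:integrales2}, rather than re-deriving anything from scratch. First I would set $M=1$ in the general setup. Then $\mathbb C^{(T-M)\times M}=\mathbb C^{T-1}$ becomes the space of column vectors $\w$, the matrix $\W^\He\W$ collapses to the scalar $\|\w\|^2$, and the operator norm $\|\W\|_{op}$ reduces to the Euclidean norm $\|\w\|$. Consequently the operator-norm unit ball $\{\W:\|\W\|_{op}<1\}$ is exactly $\mathbb B_{\mathbb{C}^{T-1}}(0,1)$, and the block matrix $\bigl[\sqrt{\I_M-\W^\He\W};\,\W\bigr]$ becomes the $T$-dimensional column vector $\bigl[\sqrt{1-\|\w\|^2},\,\w^{\textnormal{T}}\bigr]^{\textnormal{T}}$, which is precisely the output of $\vartheta_3$ as defined in the lemma.

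With these identifications in place, Corollary~\ref{cor:integrales2} reads, for every integrable $f:\mathbb{G}(1,\mathbb{C}^T)\to\mathbb{C}$,
\[
\int_{[\x]\in\mathbb{G}(1,\mathbb{C}^T)}f([\x])\,d[\x]
=\int_{\w\in\mathbb B_{\mathbb{C}^{T-1}}(0,1)}
f\!\left(\vartheta_3(\w)\right)\,d\w .
\]
This is exactly the statement that $\vartheta_3$ pushes the uniform (Lebesgue) measure on the unit ball forward to the uniform (unitarily invariant) measure on the Grassmannian of lines; equivalently, $\vartheta_3$ is measure preserving. The second sentence of the lemma, the sampling recipe, is then just the probabilistic reading of this identity: taking $\w$ uniform on the ball makes $\vartheta_3(\w)$ uniform on $\mathbb{G}(1,\mathbb{C}^T)$, which I would state by applying the integral identity to indicator functions of measurable subsets.

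Since every ingredient is inherited from the already-proved Corollary~\ref{cor:integrales2} (which itself rests on Proposition~\ref{prop:balltoG} and Lemma~\ref{lem:otrojacobian}), there is no genuine analytic obstacle here; the proof is a specialization argument. The only point deserving explicit care is verifying that the scalar reductions are correct, namely that for $M=1$ one indeed has $\|\W\|_{op}=\|\w\|$ and $\det(\I_M-\W^\He\W)=1-\|\w\|^2$, so that the square-root block of the general formula is well defined (positive) precisely on the ball $\|\w\|<1$. I would note these identifications explicitly and then invoke Corollary~\ref{cor:integrales2} to conclude.
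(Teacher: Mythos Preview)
Your proposal is correct and is exactly the approach the paper takes: the paper presents Lemma~\ref{cor:integralesM1} simply as the $M=1$ particularization of Corollary~\ref{cor:integrales2}, without a separate proof. Your explicit checks that $\|\W\|_{op}=\|\w\|$ and $\I_M-\W^\He\W=1-\|\w\|^2$ are the only details needed, and they are straightforward.
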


%\begin{proof}
%Use $M = 1$ in Corollary \ref{cor:integrales2}. 
%\end{proof}

\subsection{Main result}
The following theorem summarizes the measure preserving Grass-Lattice mapping for SIMO channels.

\begin{theorem}\label{th:genproj}
  Let us consider a noncoherent SIMO communication system with coherence time $T \geq2$ and let $(\a, \b) = (a_1,\ldots, a_{T-1}, b_1,\ldots, b_{T-1} )$ be any point in the unit hypercube ${\cal{I}}$. The mapping $\vartheta:\mathcal{I}\to \mathbb{G}\left(1,\mathbb{C}^T\right)$ given by
  \begin{equation*}
  \vartheta(\a , \b) = \begin{bmatrix} \sqrt{1-\|\w\|^2} \\ \w \end{bmatrix},
  \end{equation*}
\noindent where:
\begin{itemize}
    \item $\w =\z f_{T-1}(\|\z\|)$, where $f_{T-1}$ is defined in \eqref{eq:fT}.
     \item $\z=(z_1,\ldots,z_{T-1})^{\textnormal{T}}$ with $z_k=F^{-1}(a_k) +j F^{-1}(b_k)$, where $F(x)$ is given in \eqref{eq:cdf}.
\end{itemize}
 Then, $\vartheta$ has a constant Jacobian and thus it is measure preserving. 
\end{theorem}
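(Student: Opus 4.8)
The plan is to exploit the factorization $\vartheta = \vartheta_3 \circ \vartheta_2 \circ \vartheta_1$ together with the chain rule for Jacobians, $\Jac \vartheta = (\Jac \vartheta_3)(\Jac \vartheta_2)(\Jac \vartheta_1)$, each factor evaluated at the appropriate intermediate point. The conceptual heart of the argument is that the three constituent maps do \emph{not} individually have constant Jacobians, yet their non-constant factors cancel in the product and leave a constant. The Gaussian vector $\z$ is only an auxiliary device: its density enters one factor and its reciprocal enters the next.

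First I would record $\Jac \vartheta_1$. By Lemma \ref{lem:twofold}, applying the inverse-CDF map $F^{-1}$ componentwise sends the uniform density on $\mathcal{I}$ to the standard complex Gaussian density $g(\z) = \pi^{-(T-1)} e^{-\|\z\|^2}$ on $\mathbb{C}^{T-1}$. Since inverse transform sampling turns a target density into the reciprocal of its Jacobian, this gives $\Jac \vartheta_1 = 1/g(\z) = \pi^{T-1} e^{\|\z\|^2}$, which is manifestly non-constant. Next, by Lemma \ref{lem:pointsunitball}, $\vartheta_2$ carries the Gaussian density $g$ to the uniform density $1/V$ on the ball $\mathbb{B}_{\mathbb{C}^{T-1}}(0,1)$, where $V$ denotes its (constant) volume. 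The density-transport relation $g(\z) = (1/V)\,\Jac \vartheta_2$ then yields $\Jac \vartheta_2 = V\, g(\z)$. Multiplying the two factors, the Gaussian density cancels exactly, $\Jac(\vartheta_2 \circ \vartheta_1) = V\,g(\z)\cdot 1/g(\z) = V$, a constant; this cancellation is the crux.

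Finally, Lemma \ref{cor:integralesM1} asserts that $\vartheta_3$ is measure preserving, i.e. it pushes Lebesgue measure on the ball to the invariant measure on $\mathbb{G}(1,\mathbb{C}^T)$, so $\Jac \vartheta_3$ is constant. Combining, $\Jac \vartheta = (\Jac \vartheta_3)\cdot V$ is constant, and a diffeomorphism with constant Jacobian between the hypercube (carrying uniform measure) and $\mathbb{G}(1,\mathbb{C}^T)$ necessarily sends the uniform measure to the uniform (normalized invariant) measure, which is the desired measure-preserving property. Equivalently, one may bypass Jacobians entirely and chain the three change-of-variables identities from the lemmas: for any integrable $f$ on $\mathbb{G}(1,\mathbb{C}^T)$, substituting $\vartheta_3$, then $\vartheta_2$, then $\vartheta_1$ transforms $\int_{\mathbb{G}(1,\mathbb{C}^T)} f\,d[\x]$ into $V \int_{\mathcal{I}} f(\vartheta(\a,\b))\,d(\a,\b)$, and testing against $f\equiv 1$ fixes the normalization.

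The only step demanding genuine care is the bookkeeping of the four reference measures involved (Lebesgue on $\mathcal{I}$, the Gaussian $g$ on $\mathbb{C}^{T-1}$, Lebesgue on the ball, and the invariant measure on the Grassmannian) and the points at which each Jacobian factor is evaluated under the chain rule. Once these are aligned, the cancellation of $g(\z)$ between $\vartheta_1$ and $\vartheta_2$ does essentially all the work, and no step requires a new computation beyond the three lemmas already established.
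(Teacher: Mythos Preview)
Your proposal is correct and follows essentially the same route as the paper. The paper's proof chains the three change-of-variables identities from Lemma~\ref{lem:twofold}, Lemma~\ref{lem:pointsunitball}, and Corollary~\ref{cor:integrales2} (specialized to $M=1$, i.e.\ Lemma~\ref{cor:integralesM1}) at the level of integrals, which is exactly the alternative you spell out in your last paragraph; your primary Jacobian formulation is simply the pointwise version of the same cancellation, and the paper's Appendix for Lemma~\ref{lem:pointsunitball} in fact computes $\Jac\vartheta_2(\z)=e^{-\|\z\|^2}/\Gamma(T)$ explicitly, matching your $V\,g(\z)$.
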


\begin{proof}
The proof is given in Appendix \ref{app_th1}.
\end{proof}

\section{Encoding and Decoding}\label{sec:encoding}

%As the measure preserving map is defined on an open interval $(0,1)^{2\left(T-1\right)}$, for a given number $B$ of bits per real component, we consider $2^B$ equispaced points on the interval $[\alpha, 1-\alpha]$:

For $M=1$, the Grassmann manifold $\Gras$ has complex dimension $T-1$ and real dimension $2(T-1)$. Since the measure preserving map we define has domain $(0,1)^{2\left(T-1\right)}$ and $(0,1)$ is an open interval, whatever discretization we choose in $(0,1)$ will necessarily have a lowest point $\alpha>0$ and a highest point $1-\beta<1$. Due to the symmetry of the mapping we find no reasons to choose $\beta\neq\alpha$ and hence for a given number $B$ of bits per real component, we consider $2^B$ equispaced points on the interval $[\alpha, 1-\alpha]$:
\begin{equation}
    \hat{x}_p = \alpha + p \ \frac{1-2\alpha}{2^B-1}, \quad 0\leq p \leq 2^B-1,
    \label{eq:alpha}
\end{equation}
where $\alpha$ is a parameter that can be optimized for performance (see Sec. \ref{sec:results}). The discretization of the real and imaginary (I/Q) components as in \eqref{eq:alpha} allows us to use a simple bit-to-symbol Gray mapper. Therefore, the uniformly distributed points on the unit cube $a_1,b_1,\ldots,a_{T-1},b_{T-1}$ are chosen randomly from the regular lattice defined by  \eqref{eq:alpha}. The procedure for computing the codeword to be transmitted $\x$ for an input $a_1,b_1,\ldots,a_{T-1},b_{T-1}$ is then:
  \begin{enumerate}
    \item Compute $z_k = F^{-1}(a_k) + j F^{-1}(b_k)$, $k=1,\ldots, T-1$, where $F(x)$ is the cdf of a ${\cal{N}}(0,1/2)$. The point $\z$ is isotropically distributed as $\z \sim {\cal{CN}}({\bf 0}, \I_{T-1})$.
    \item Compute $\w =\z f_{T-1}(\|\z\|)$, where $f_{T-1}(\cdot)$ is given in \eqref{eq:fT}. The point $\w$ is uniformly distributed in $\mathbb{B}_{\mathbb{C}^{T-1}}(0,1)$.
  \item Output 
  $$\x= \begin{bmatrix} \sqrt{1-\|\w\|^2} \\ \w \end{bmatrix}.$$
  \end{enumerate}
  
  The result of this procedure is a point $[\x]$ with representative $\x = \left[ \sqrt{1 - \| \w \|^2}, \w^{\textnormal{T}} \right]^{\textnormal{T}}$ which is uniformly distributed in $\mathbb{G}\left(1,\mathbb{C}^T\right)$.
  The cardinality of the structured Grassmannian constellation is $|{\cal {C}}|= 2^{2B(T-1)}$, and the spectral efficiency or rate is $R = \frac{2 B (T-1)}{T} = 2B \left(1-\frac{1}{T}\right)$ b/s/Hz.
  
  The input and output of the mappings $\vartheta_1$, $\vartheta_2$ and $\vartheta_3$ that form the Grass-Lattice mapping can be plotted for the case $T = 2$. For this specific case, the input $(\a,\b)$ has two real components $(a_1,b_1)$ and vectors $\z$ and $\w$ have one complex component ($z_1$ and $w_1$ respectively). To represent the points $[\x]$ with representative $\x = \left[x_1, x_2\right]^{\textnormal{T}} =\left[\sqrt{1 - |w_1|^2},w_1\right]^\textnormal{T}$ we use the Hopf map:
  
  \begin{equation}
    \begin{matrix}
        p:&\{(x_1, x_2)\in \mathbb{R} \times \mathbb{C}:|x_1|^2+|x_2|^2=1\} &\to& \mathbb{S}^2 \\
        &(x_1, x_2)&\mapsto& \left(2x_1 x_2^*, |x_2|^2 - |x_1|^2\right).
    \end{matrix}
  \end{equation}
  
   Fig. \ref{fig:grasslattice_mapping_005} shows the generation of the whole Grass-Lattice constellation for $T = 2$, $B = 4$ and $\alpha = 0.05$.
  
\begin{figure}[t!]
    \begin{subfigure}[b]{.49\textwidth}
        \centering
        \includegraphics[width=\columnwidth]{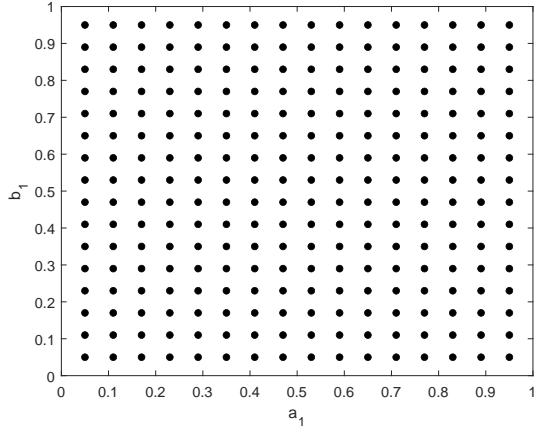}
        \caption{Lattice $\mathcal{I}$.}
        \end{subfigure}
        \begin{subfigure}[b]{.49\textwidth}
        \centering
        \includegraphics[width=\columnwidth]{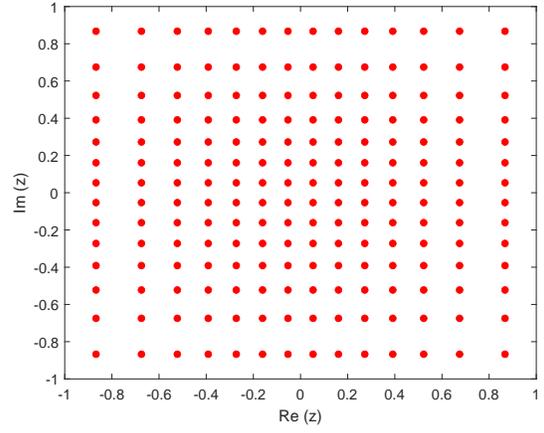}
        \caption{Normally distributed points in $\mathbb{C}$.}
    \end{subfigure}\hfill
    \begin{subfigure}[b]{.49\textwidth}
        \centering
        \includegraphics[width=\columnwidth]{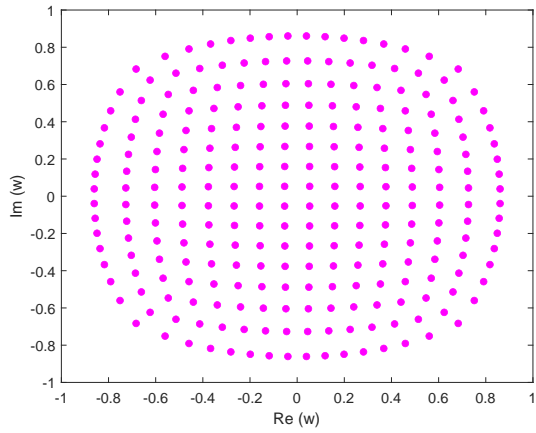}
        \caption{Uniformly distributed points in $\mathbb{B}_{\mathbb{C}}(0,1)$.}
    \end{subfigure}
    \begin{subfigure}[b]{.49\textwidth}
        \centering
        \includegraphics[width=\columnwidth]{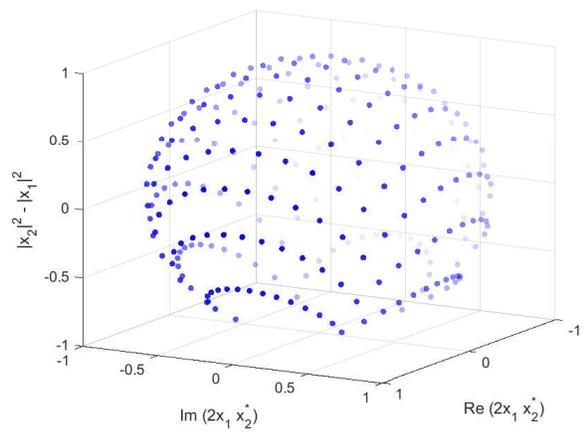}
        \caption{Uniformly distributed points in $\mathbb{G}(1,\mathbb{C}^2)$.}
    \end{subfigure}\hfill
    \caption{Grass-Lattice mapping for $T = 2$, $B = 4$ and $\alpha = 0.1$.}
    \label{fig:grasslattice_mapping_005}
\end{figure}

For the Grass-Lattice decoding, let us first consider the case where the number of receive antennas is $N=1$, so the received $T\times 1$ signal is $\y = \x h + \n$. Let $ \y = (v_0,\mathbf{v})$, then the decoder performs the following sequence of steps:
\begin{enumerate}
  \item Compute $\w=\mathbf{v} |v_0|/(v_0\|\y\|)$ (the chordal distance from $\left[\sqrt{1-\|\w\|^2} , \w^{\textnormal{T}}\right]^{\textnormal{T}}$ to $\y$ in $\mathbb G(1,\mathbb C^T)$ is minimal for this choice of $\w$).
  \item Solve the equation $sf_{T-1}(s)=\|\w\|$, for instance by bisection, and let $\z = s \w/\|\w\|$. Denote by $z_1,\ldots,z_{T-1}$ its complex components.

  \item Compute $\hat{a}_k= F(\Re(z_k))$, $\hat{b}_k= F(\Im(z_k))$, where $F(x)$ is the cdf of a ${\cal{N}}(0,1/2)$.
  \item Finally, $a_k = \lfloor \hat{a}_k \rceil $ and $b_k = \lfloor \hat{b}_k \rceil $ where $\lfloor x \rceil $ denotes the nearest point to $x$ in the lattice \eqref{eq:alpha}. 
\end{enumerate}

\paragraph*{\textbf{Multi-antenna receiver}} For $N > 1$, we just perform a denoising step at the decoder before doing steps 1-4 above. To do so, we use the fact that the signal of interest $\x \h^{\textnormal{T}}$ in \eqref{eq:system_model} is a rank-1 component of $\mathbf{Y}$. From the Eckart-Young theorem, the best rank-1 approximation in the Frobenius norm of $\Y$ is given by $ \lambda_1 \mathbf{r} \mathbf{g}^{\textnormal{H}}$, where $\lambda_1$ is the largest singular value of $\mathbf{Y}$, and $\mathbf{r}$ and $\mathbf{g}$ are the corresponding left and right singular vectors. We then take $\mathbf{r} = (v_0,\mathbf{v})$ as a denoised $T\times 1$ vector of observations and compute the sequence of steps 1-5 above. Interestingly, $\mathbf{r}$ is the solution of
\begin{equation*}
    \arg \max_{\mathbf{r} \in \mathbb{C}^T : \ \| r \|^2 = 1} \| \mathbf{Y}^{\textnormal{H}} \mathbf{r} \|^2,
\end{equation*}
so it can be viewed as a relaxed version of the ML decoder presented in \eqref{eq:ML} where the discrete nature of the constellation has been relaxed. Therefore, $\mathbf{r}$ is a rough estimate of the transmitted symbol $\mathbf{x}$ on the unit sphere.

The encoding and decoding for the Grass-Lattice constellation can be performed on the fly, without the need to store the entire constellation. At the decoder, after performing steps 1-4 above, the complexity is that of a symbol-by-symbol detector per real component, similar to the decoding of a QAM constellation.

\section{Performance Evaluation}\label{sec:results}

% \textcolor{red}{Include estimation of $\alpha_{opt}$ based on minimum chordal distance between codewords for small packings.}

In this section, we assess the performance of the proposed Grass-Lattice constellation, and compare it to other structured and unstructured Grassmannian constellations used for noncoherent communications. Since we compare constellations with different spectral efficiencies, we will show figures of SER or BER versus $E_b / N_0$ (SNR normalized by the spectral efficiency).

\subsection{SER/BER vs. $\alpha$}
%Let us first evaluate the influence of $\alpha$, which determines the length of the lattice used for each real component in \eqref{eq:alpha}, on the SER (same influence as on the BER). Fig. \ref{fig:alpha} shows the SER variation for a fixed SNR = $20$ dB, $T \in \{4,6\}$, $N = 2$ and $B \in \{1,2\}$. As we can see, $\alpha$ may have a significant impact on the SER performance of the Grass-Lattice constellation. Further, the SER varies significantly with the number of bits, $B$, used to encode each real component. For the scenarios considered in Fig. \ref{fig:alpha}, the optimal value of $\alpha$ is around 0.2 for $B = 1$ and around $0.15$ for $B = 2$. For the rest of experiments in this section, we will choose the value of $\alpha$ that provides the lowest SER at SNR = $20$ dB (the optimal value of $\alpha$ does not differ significantly for other SNRs). 

\begin{figure}[t!]
    \begin{subfigure}{0.49\textwidth}
        \centering
        \includegraphics[width=\columnwidth]{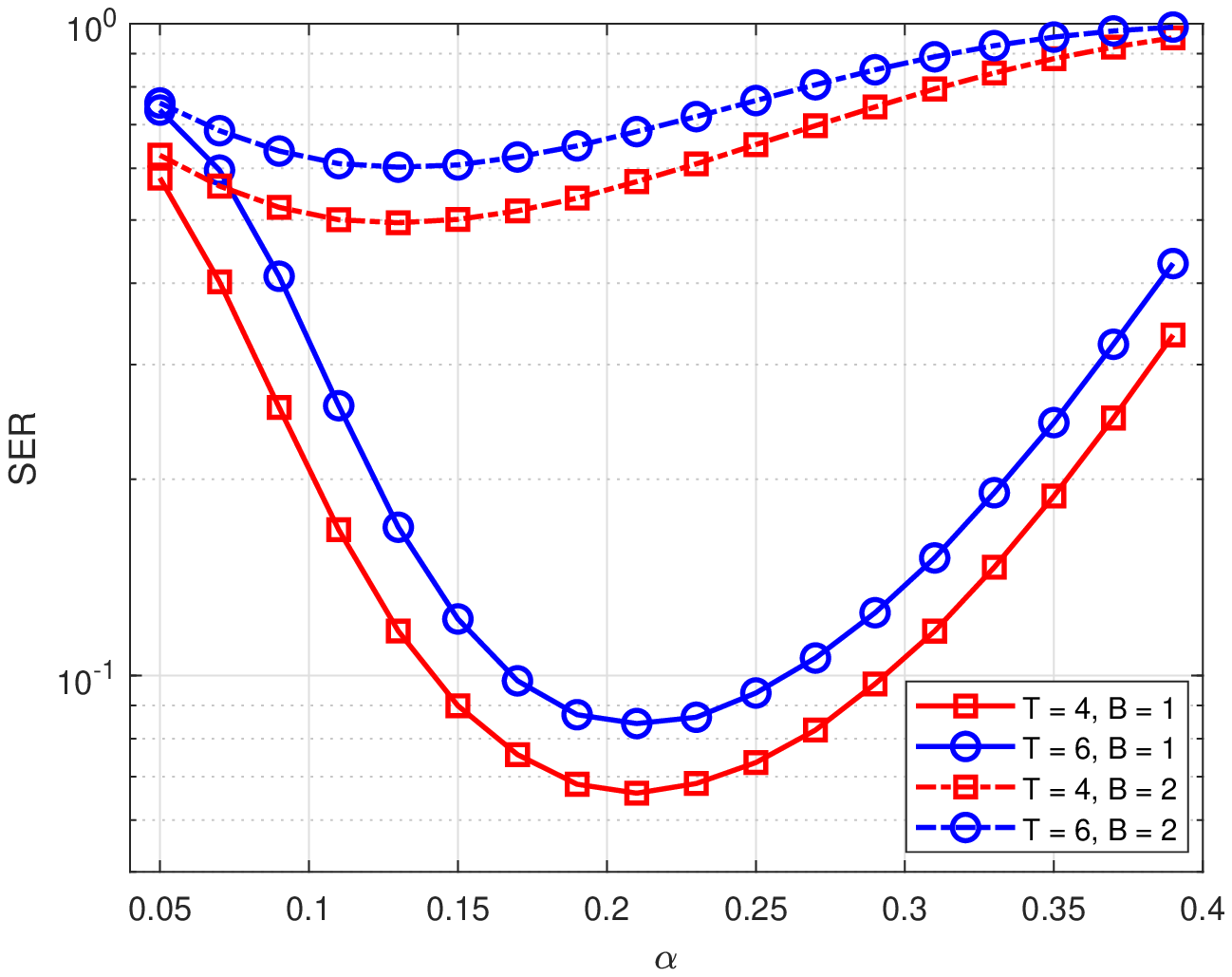}
        \caption{SNR = 10 dB}
        \label{fig:alpha_SNR10}
        \end{subfigure}
        \begin{subfigure}{0.49\textwidth}
        \centering
        \includegraphics[width=\columnwidth]{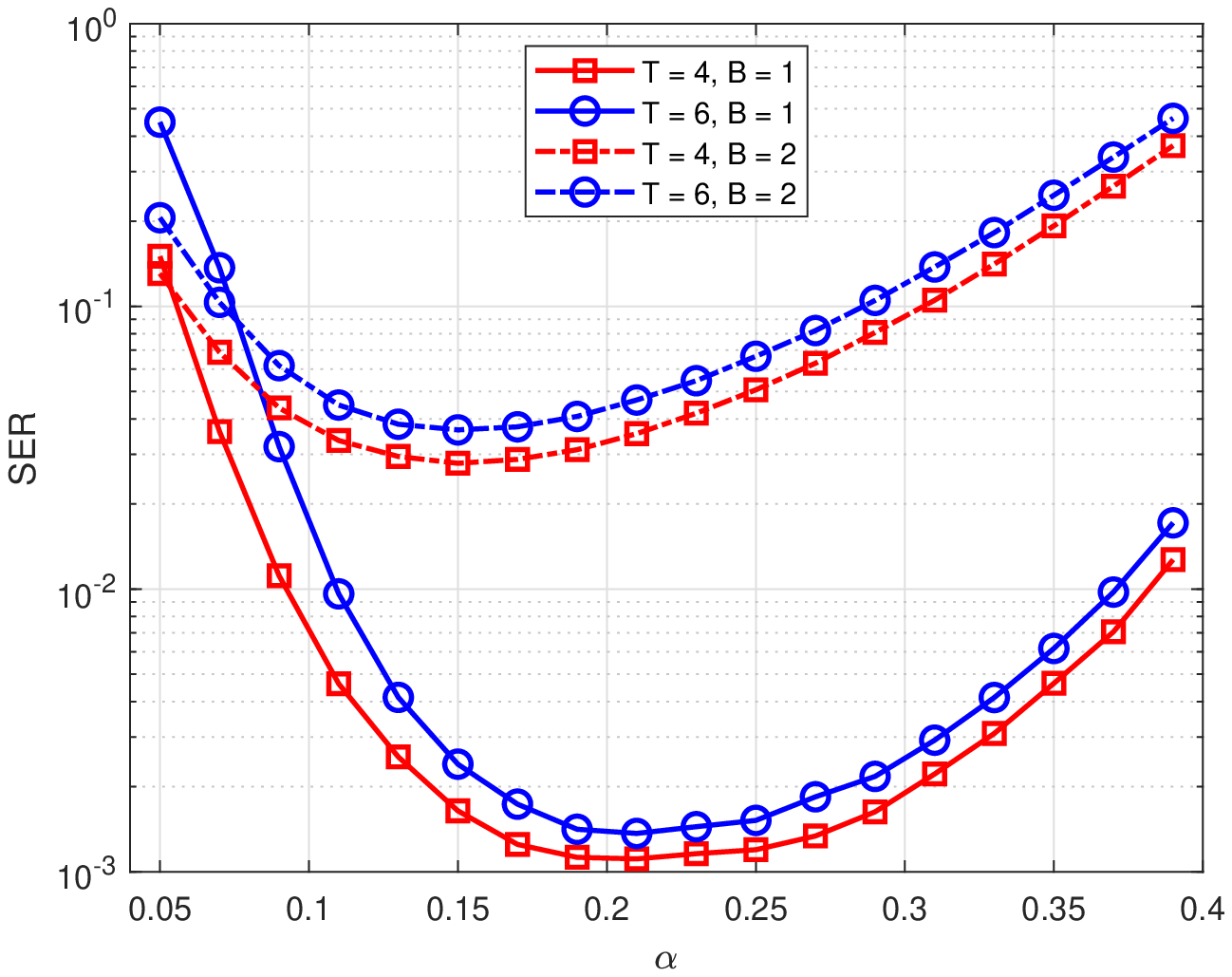}
        \caption{SNR = 20 dB}
        \label{fig:alpha}
    \end{subfigure}\hfill
    \caption{SER as a function of $\alpha$ of the Grass-Lattice constellation for $T \in \{4,6\}$, $N = 2$, $B \in \{1,2\}$ and SNR $\in \{10,20\}$ dB.}
\end{figure}

\begin{figure}[h!]
    \begin{subfigure}{0.49\textwidth}
        \centering
        \includegraphics[width=\columnwidth]{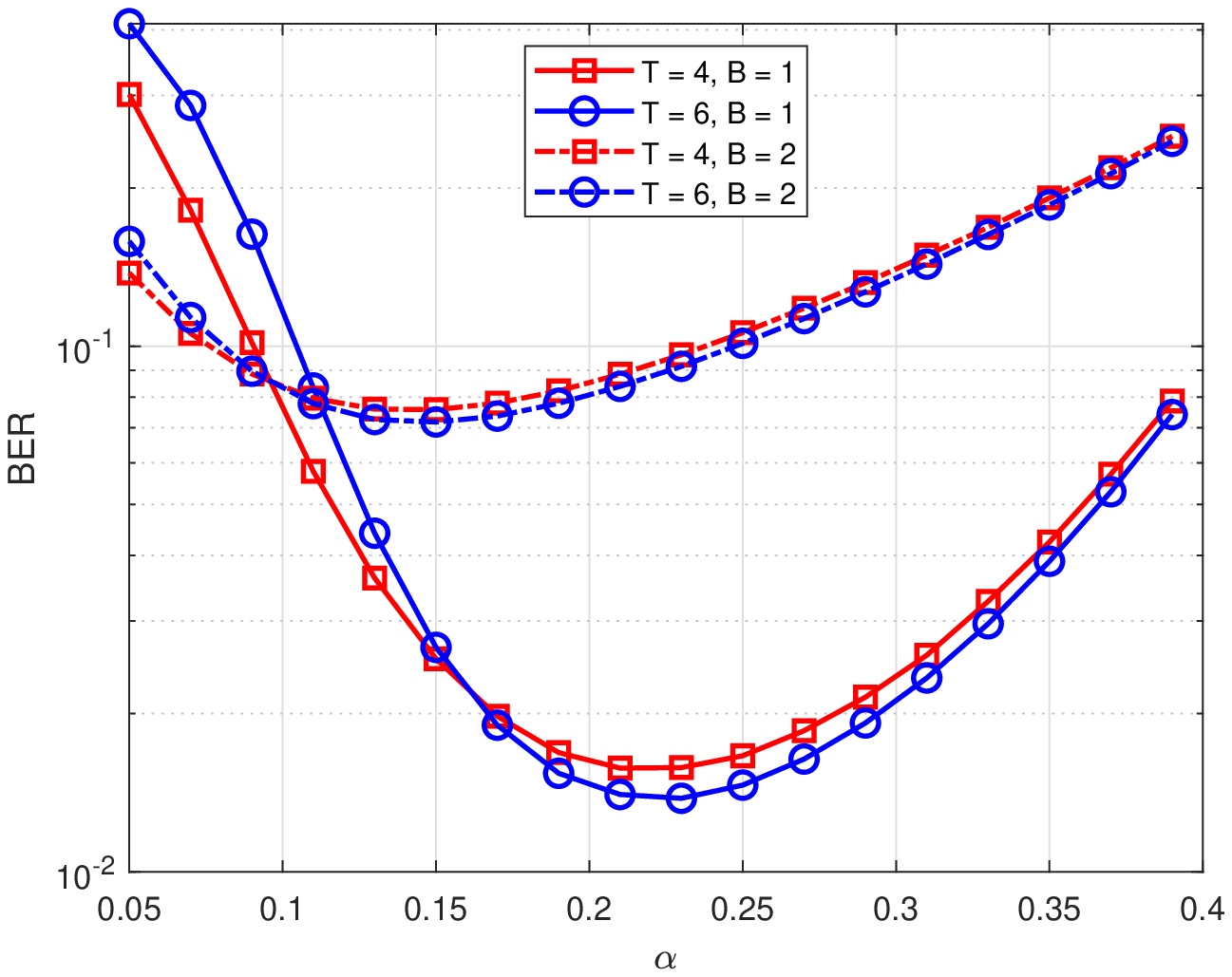}
        \caption{SNR = 10 dB}
        \label{fig:alpha_BER_SNR10}
        \end{subfigure}
        \begin{subfigure}{0.49\textwidth}
        \centering
        \includegraphics[width=\columnwidth]{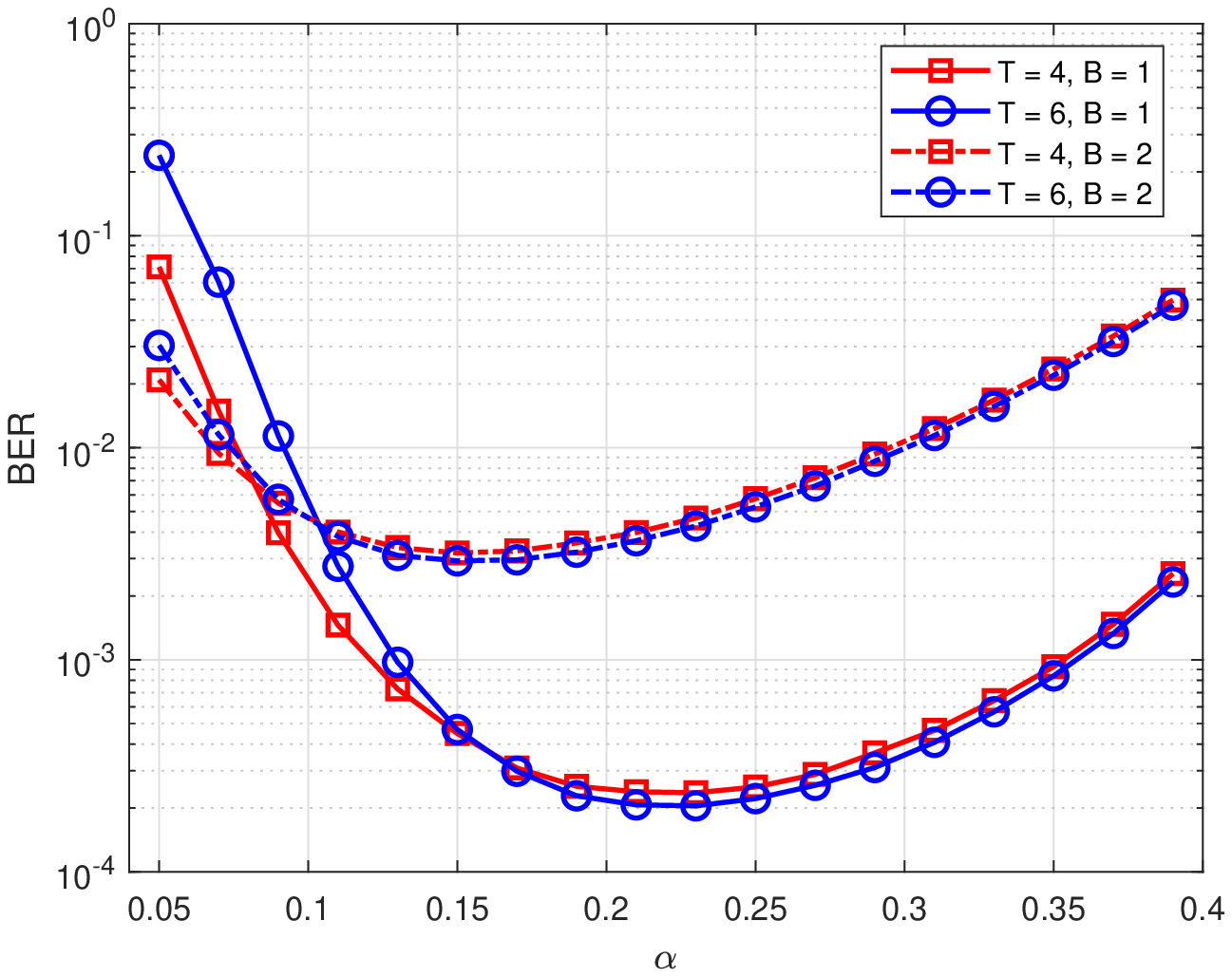}
        \caption{SNR = 20 dB}
        \label{fig:alpha_BER}
    \end{subfigure}\hfill
    \caption{BER as a function of $\alpha$ of the Grass-Lattice constellation for $T \in \{4,6\}$, $N = 2$, $B \in \{1,2\}$ and SNR $\in \{10,20\}$ dB.}
\end{figure}

Let us first evaluate the influence of $\alpha$, which determines the length of the lattice used for each real component in \eqref{eq:alpha}, on the SER and the BER. Figs. \ref{fig:alpha_SNR10}, \ref{fig:alpha}, \ref{fig:alpha_BER_SNR10} and \ref{fig:alpha_BER} show the SER/BER vs. $\alpha$ curves for SNR $\in \{10, 20\}$ dB, $T \in \{4,6\}$, $B \in \{1,2\}$ and $N = 2$. Remember that the spectral efficiency of the Grass-Lattice constellation is $R = \frac{2(T-1)B}{B}$ b/s/Hz. As we can see, $\alpha$ may have a significant impact on the SER and BER performance of the Grass-Lattice constellation. Further, the SER and BER vary significantly with the number of bits, $B$, used to encode each real component. It is also worth noticing that the SER/BER vs. $\alpha$ curves are smooth functions with a unique minimum so the optimal value $\alpha^*$ can be easily determined by searching over a predetermined grid. Clearly, the number of bits $B$ influences the optimal $\alpha^*$ more than the coherence time $T$. Another aspect that we observe is that the BER and SER vary similarly with $\alpha$, and hence the optimal value $\alpha^*$ can be obtained from either the SER or BER curve. We can also see that the optimal value $\alpha^*$ does not change significantly with the SNR. Therefore, for the rest of experiments in this section, we will choose the value of $\alpha$ that provides the lowest SER at SNR = $20$ dB. This value is easily precomputed offline and then used throughout the entire simulation.

% \begin{figure}[h!]
%     \begin{center}
%         \includegraphics[width=.7\linewidth]{images/alpha_B12_T46_N2.eps}
%     \end{center}
%     \caption{SER as a function of $\alpha$ of the Grass-Lattice constellation for $T \in \{4,6\}$, $N = 2$, $B \in \{1,2\}$ and SNR = $20$ dB.}
%     \label{fig:alpha}
% \end{figure}

% \begin{figure}[h!]
%     \begin{center}
%         \includegraphics[width=.77\linewidth]{images/alpha_BER_T46_N2_B12_SNR20.eps}
%     \end{center}
%     \caption{BER vs. $\alpha$ for the Grass-Lattice constellation for $T \in \{4,6\}$, $N = 2$, $B \in \{1,2\}$ and SNR = $20$ dB.}
%     \label{fig:alpha_BER}
% \end{figure}

% \begin{figure}[h!]
%     \begin{center}
%         \includegraphics[width=.77\linewidth]{images/alpha_SER_T46_N2_B12_SNR10.eps}
%     \end{center}
%     \caption{SER vs. $\alpha$ for the Grass-Lattice constellation for $T \in \{4,6\}$, $N = 2$, $B \in \{1,2\}$ and SNR = $10$ dB.}
%     \label{fig:alpha_SNR10}
% \end{figure}

% \begin{figure}[h!]
%     \begin{center}
%         \includegraphics[width=.77\linewidth]{images/alpha_BER_T46_N2_B12_SNR10.eps}
%     \end{center}
%     \caption{BER vs. $\alpha$ for the Grass-Lattice constellation for $T \in \{4,6\}$, $N = 2$, $B \in \{1,2\}$ and SNR = $10$ dB.}
%     \label{fig:alpha_BER_SNR10}
% \end{figure}

\subsection{Minimum chordal distance vs. $\alpha$}

For constellations of relatively small cardinality (up to 1024 codewords), instead of resorting to a SER/BER simulation to calculate the optimum value of parameter $\alpha$, we can generate the whole Grass-Lattice constellation and use the minimum chordal distance between codewords as the criterion to optimize $\alpha$. In this way, $\alpha^{*}$ can be computed much faster than with the SER/BER simulation proposed in the previous section.

Fig. \ref{fig:min_chordal} shows the minimum chordal distance between codewords for different values of $\alpha$ ranging from 0.02 to 0.4, $T = 2$ and $B \in \{2,3,4,5\}$. We can observe that the functions are smooth and have a clear maximum, which gives the value of $\alpha^{*}$. We can also see that, in this case, the optimum value of $\alpha$ with respect to the minimum chordal distance does not change significantly when we increase the number of bits $B$ used to encode each real component. As the SER simulation for a fixed SNR gives a better approximation of which is the optimum value of $\alpha$ in practice, for the rest of experiments in this section we will choose the value of $\alpha$ that provides the lowest SER at SNR = $20$ dB, as it was stated before.

\begin{figure}[t!]
    \begin{center}
        \includegraphics[width=.7\linewidth]{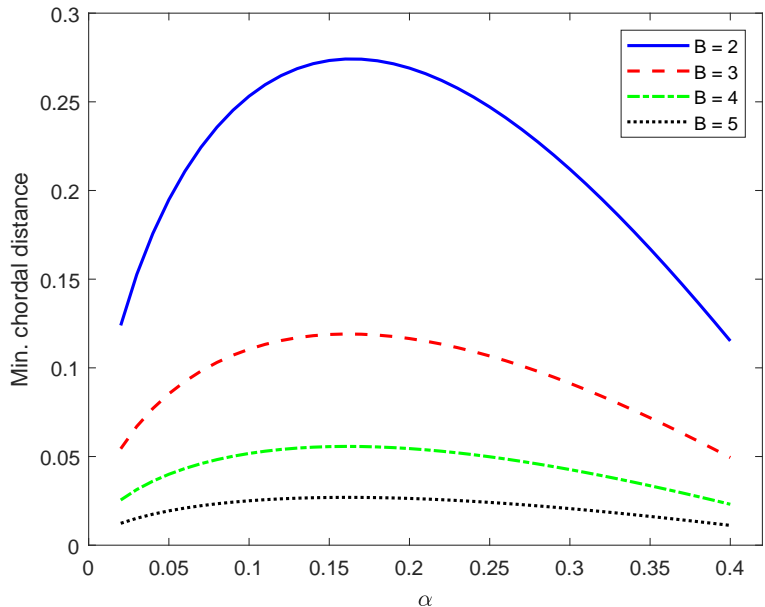}
    \end{center}
    \caption{Minimum chordal distance as a function of $\alpha$ of the Grass-Lattice constellation for $T = 2$ and $B \in \{2,3,4,5\}$.}
    \label{fig:min_chordal}
\end{figure}

\subsection{SER/BER vs. $E_b/N_0$}

Fig. \ref{fig:SER_T2} shows the SER as a function of $E_b / N_0$ for the proposed Grass-Lattice codebook for $T = 2$ symbol periods and $N = 1$ antenna. For comparison we include in the plot the structured  Cube-Split \cite{cube_split} and Exp-Map \cite{cipriano} constellations, as well as the unstructured Grassmannian constellations proposed in \cite{CuevasTCOM} that minimize the asymptotic PEP union bound (and hence labeled as UB-Opt). 

In addition, we include as a baseline the performance of a coherent pilot-based scheme. The transmitted signal for the pilot-based scheme is $\x_{coh} = [1, x_d]^T/\sqrt{2}$, where the first symbol is the constant pilot, which is known at the receiver, and the second symbol $x_d$ is taken from a QAM constellation with cardinality $2^{2(T-1)B}$, so that the coherent scheme has the same spectral efficiency as Grass-Lattice. That is, when $B=2$ we use a 16-QAM constellation, and when $B=3$ we use a 64-QAM constellation. The QAM constellations are normalized such that $E[|x_d|^2] =1$. Therefore, $E[\x_{coh}^H\x_{coh}]=1$ and hence the average transmit power of the pilot-based scheme is the same as that of the noncoherent schemes. Notice also that the power devoted to the data transmission is the same as the power devoted to training. This is the optimal power allocation for $T=2$ and $M=1$ as shown in \cite{Hassibi_TIT03}\footnote{In fact, it is shown in \cite{Hassibi_TIT03} that from an information-theoretic point of view using a number of pilots equal to the number of transmit antennas $M$ is always optimal, provided that we optimize the power allocation between pilots and data. Equal power allocation is optimal for $T=2$ and $M=1$. Nevertheless, we should bear in mind that these results are obtained by maximizing a lower bound on the capacity. Conclusions might be different if we optimize instead the SER or BER performance.}.

\begin{figure}[t!]
    \begin{center}
        \includegraphics[width=.7\linewidth]{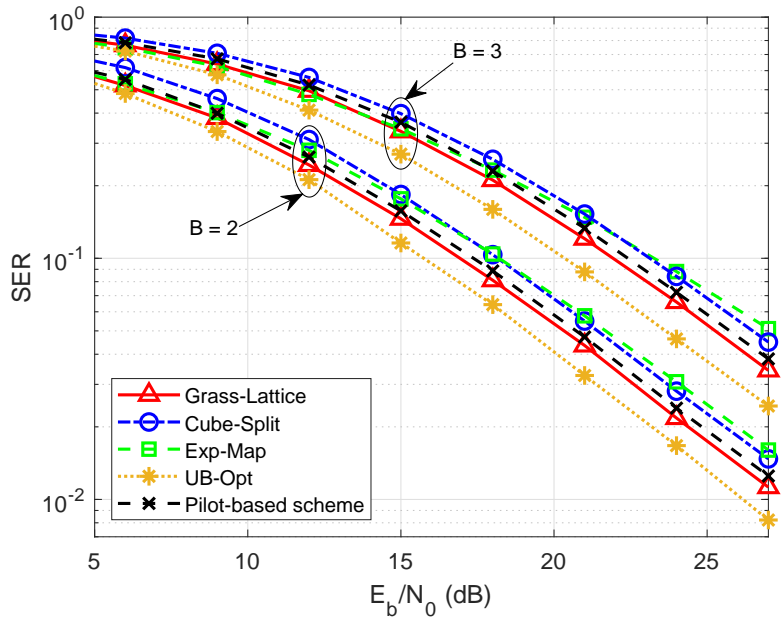}
    \end{center}
    \caption{Grass-Lattice SER curves in comparison with a pilot-based scheme, UB-Opt, Cube-Split and Exp-Map constellations for $T = 2$, $N = 1$ and $B \in \{2,3\}$.}
    \label{fig:SER_T2}
\end{figure}

For Grass-Lattice and Cube-Split we use $B \in \{2,3\}$ bits per real component, while for UB-Opt and Exp-Map we choose constellations with the same spectral efficiency as the ones provided by Grass-Lattice. In Fig. \ref{fig:SER_T2} we can observe that Grass-Lattice outperforms the other structured constellations and, as it was expected, it performs slightly worse than the unstructured UB-Opt constellation in terms of SER. Notice that UB-Opt uses the optimal ML detector in \eqref{eq:ML}, whereas Grass-lattice uses a suboptimal detector with much lower complexity.

\begin{figure}[htp!]
     \begin{subfigure}{\textwidth}
         \begin{center}
             \includegraphics[width=.7\columnwidth]{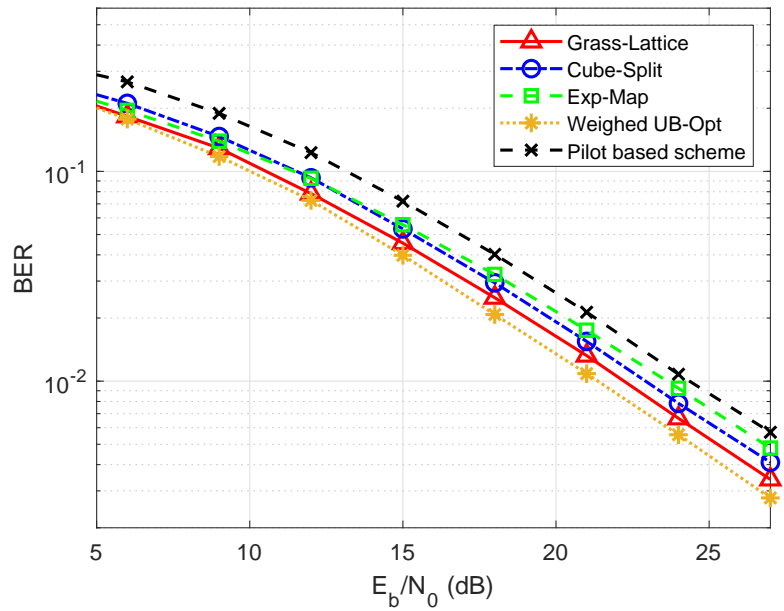}
         \end{center}
         \caption{$B = 2$}
         \label{fig:BER_T2_B2}
         \end{subfigure}
         \begin{subfigure}{\textwidth}
         \begin{center}
             \includegraphics[width=.7\columnwidth]{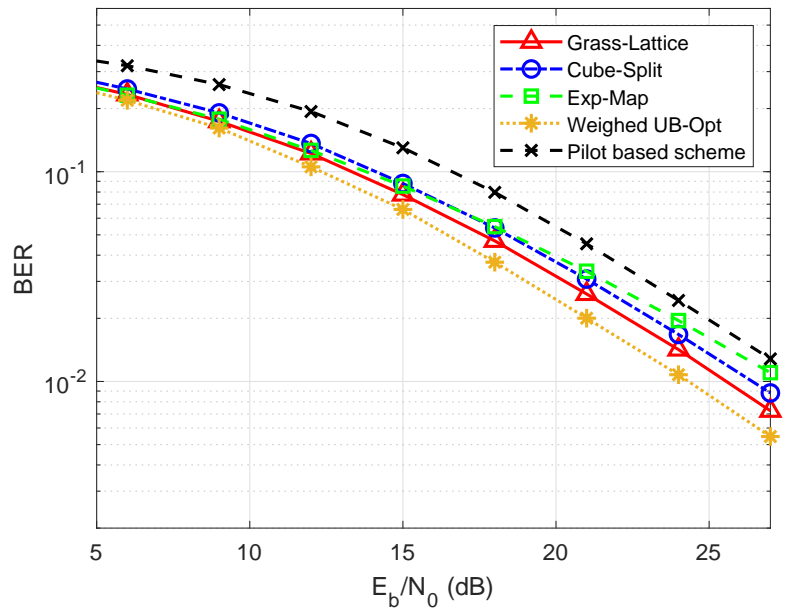}
         \end{center}
         \caption{$B = 3$}
         \label{fig:BER_T2_B3}
     \end{subfigure}
     \caption{Grass-Lattice BER curves in comparison with a pilot-based scheme, UB-Opt, Cube-Split and Exp-Map constellations for $T = 2$, $N = 1$ and $B \in \{2,3\}$.}
\end{figure}

Figs. \ref{fig:BER_T2_B2}, \ref{fig:BER_T2_B3} and \ref{fig:BER_T4} show the BER versus $E_b / N_0$ performance of Grass-Lattice constellations compared to Cube-Split, Exp-Map, weighed UB-Opt (joint constellation and bit-to-symbol mapping design) and a pilot-based scheme for $T = 2$, $N = 1$ and $B \in \{2,3\}$. For Grass-Lattice, we use a Gray encoding scheme that maps groups of $B$ bits to I/Q symbols defined in \eqref{eq:alpha}. A Gray-like encoder is also used for Cube-Split, Exp-Map and the pilot-based scheme. As we can see, Grass-Lattice constellations offer a superior performance in terms of BER than the other structured designs and the pilot-based scheme, which becomes more evident when the coherence time $T$ is smaller. The joint design of the unstructured constellation using the UB criterion and the bit labeling scheme provides for these examples the best performance.

% \begin{figure}[H]
%     \begin{subfigure}{\textwidth}
%         \begin{center}
%             \includegraphics[width=.7\columnwidth]{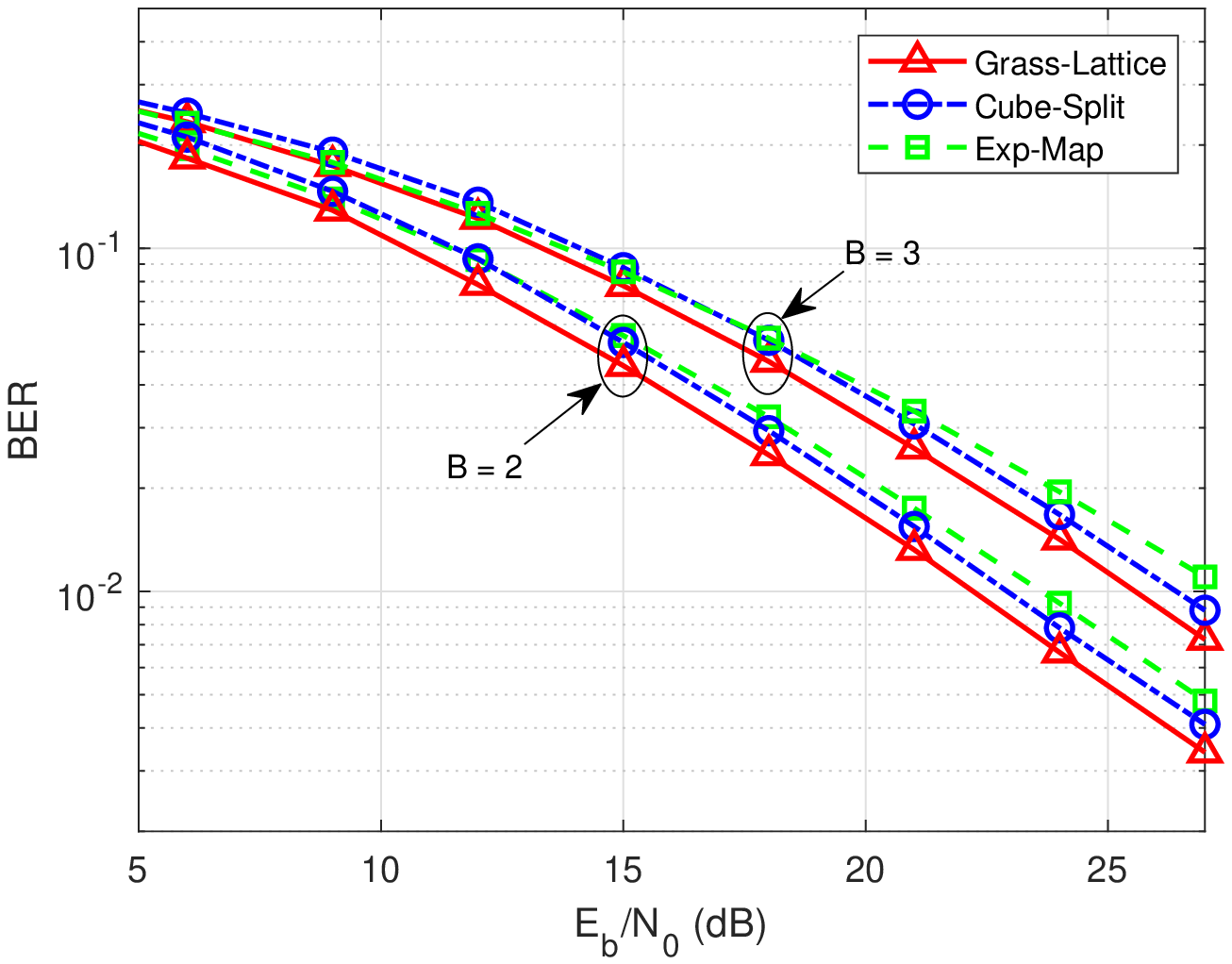}
%         \end{center}
%         \caption{$T = 2$, $N = 1$}
%         \label{fig:BER_T2}
%         \end{subfigure}
%         \begin{subfigure}{\textwidth}
%         \begin{center}
%             \includegraphics[width=.7\columnwidth]{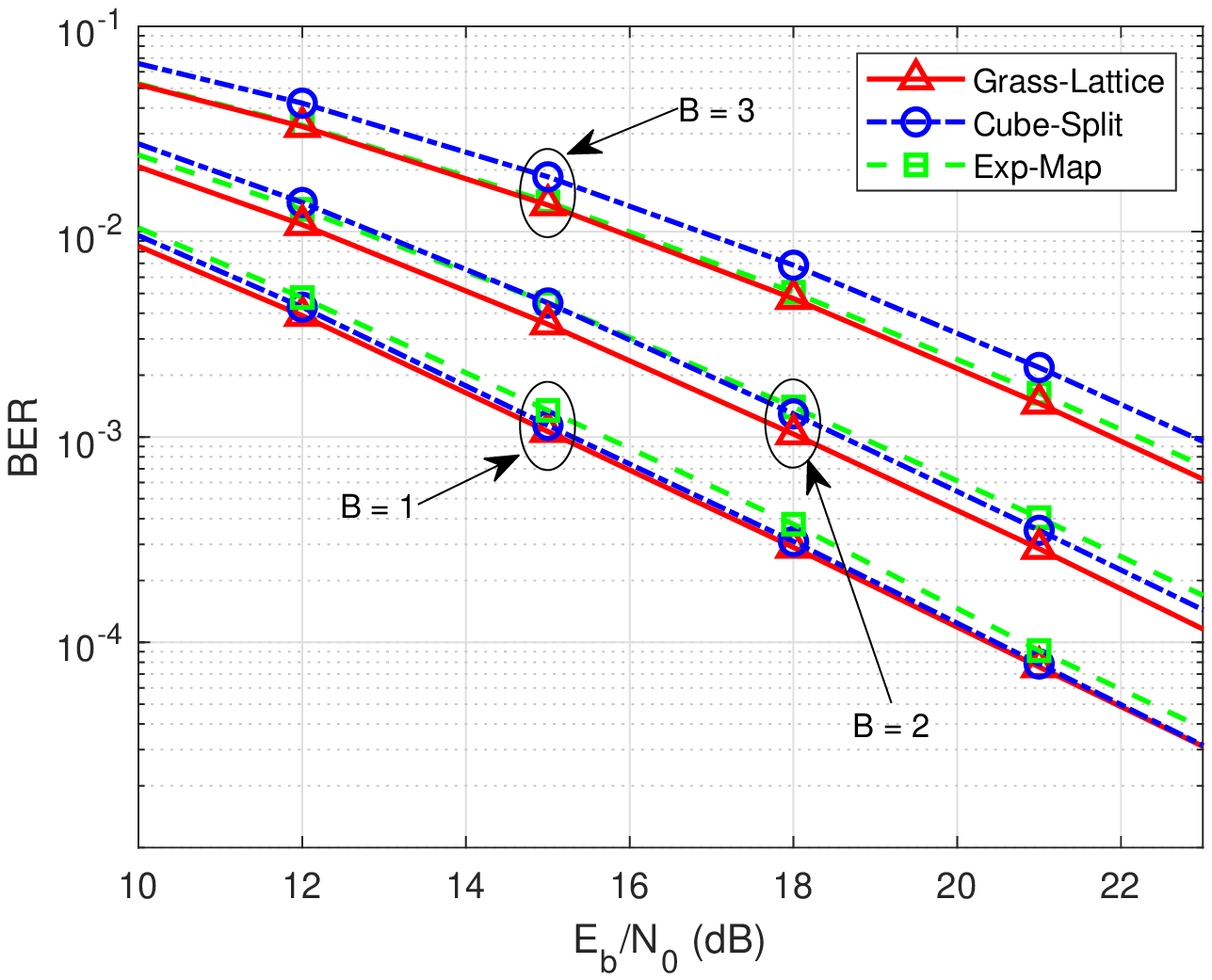}
%         \end{center}
%         \caption{$T = 4$, $N = 2$}
%         \label{fig:BER_T4}
%     \end{subfigure}
%     \caption{Grass-Lattice BER curves in comparison with Cube-Split and Exp-Map constellations for $T \in \{2,4\}$, $N \in \{1,2\}$ and $B \in \{1,2,3\}$.}
%     \label{fig:BER_T24}
% \end{figure}

% \begin{figure}[H]
%     \begin{center}
%         \includegraphics[width=.6\linewidth]{images/BER_comparison_T2_B2_N1_vs_UBOpt_vs_coherent_no_title.eps}
%     \end{center}
%     \caption{Grass-Lattice BER curves in comparison with a pilot-based scheme, UB-Opt, Cube-Split and Exp-Map constellations for $T = 2$, $N = 1$ and $B = 2$.}
%     \label{fig:BER_T2_B2}
% \end{figure}

% \begin{figure}[H]
%     \begin{center}
%         \includegraphics[width=.6\linewidth]{images/BER_comparison_T2_B3_N1_vs_UBOpt_vs_coherent_no_title.eps}
%     \end{center}
%     \caption{Grass-Lattice BER curves in comparison with a pilot-based scheme, UB-Opt, Cube-Split and Exp-Map constellations for $T = 2$, $N = 1$ and $B = 3$.}
%     \label{fig:BER_T2_B3}
% \end{figure}

\begin{figure}[t!]
    \begin{center}
        \includegraphics[width=.7\linewidth]{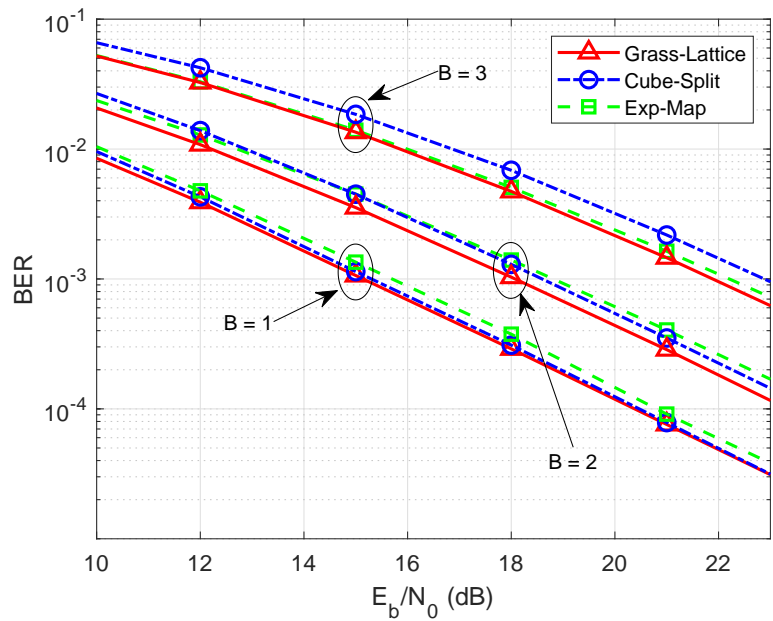}
    \end{center}
    \caption{Grass-Lattice BER curves in comparison with Cube-Split and Exp-Map constellations for $T = 4$, $N = 2$ and $B \in \{2,3\}$.}
    \label{fig:BER_T4}
\end{figure}

In Fig. \ref{fig:BER_T4} we consider a scenario with $T = 4$, $N = 2$ and $B \in \{1,2,3\}$. We restrict the comparison for this scenario to the Grass-Lattice, Cube-Split, and Exp-Map. Although Grass-Lattice is still the best performing method, the differences with Cube-Split are reduced, especially for a small number of bits.

\subsection{Spectral efficiency vs. $E_b / N_0$}

Finally, Fig. \ref{fig:spectral_eff} shows the spectral efficiency or rate in b/s/Hz against $E_b / N_0$ at BER=$10^{-4}$ for different values of $T$ and $N = 2$ for the Grass-Lattice and Cube-Split constellations. For given values of $T$ and $B$, the spectral efficiency of the Grass-Lattice code is  $\eta = \left(2B \left(T-1\right)\right) / \ T$ and the spectral efficiency of Cube-split is given by $\eta= \left(\log_2 T + 2B \left(T-1\right)\right) / \ T$. We notice from these two expressions that Cube-Split does not allow for a bit-to-symbol mapping when $T$ is not a power of 2, so Grass-Lattice achieves a wider range of spectral efficiencies. For example, we can see in this figure that Grass-Lattice allows you to design constellations for $T \in \{3, 6, 14\}$. For values of $T \in \{2, 4, 8\}$, for which Grass-Lattice and Cube-Split constellations can be both designed, we see that Grass-Lattice is more power efficient than Cube-Split when $T$ or $B$ grows. This could be at least partially explained by the fact that Cube-Split ignores the statistical dependencies between the different components of the codeword $\x$ for $T > 2$.

\begin{figure}[t!]
    \begin{center}
        \includegraphics[width=.7\linewidth]{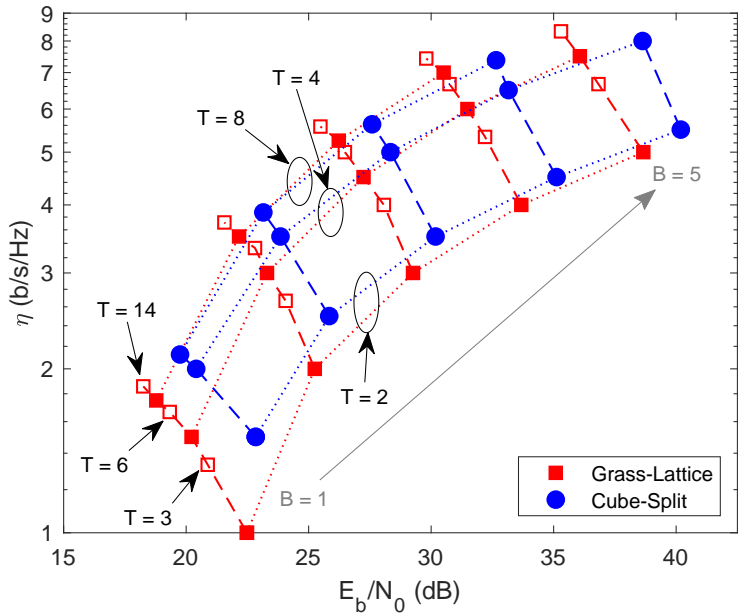}
    \end{center}
    \caption{Spectral efficiency of Grass-Lattice and Cube-Split as a function of $E_b / N_0$ at $10^{-4}$ BER for $T \in \{ 2,3,4,6,8,14 \}$ and $N = 2$.}
    \label{fig:spectral_eff}
\end{figure}

Fig. \ref{fig:spectral_eff_coherent} shows a comparison between Grass-Lattice and a coherent pilot-based scheme. For the coherent scheme for each value of $T$ we get three points that correspond to transmissions with 16-QAM, 32-QAM and 64-QAM signals. In all cases, the optimal number of pilots to send is 1, and the power allocation between the pilot and the data has been optimized according to \cite{Hassibi_TIT03}. We have used the MMSE channel estimator and the MMSE decoder. The figure clearly shows the spectral efficiency improvement of Grass-Lattice over the pilot-based scheme.

\begin{figure}[t!]
    \begin{center}
        \includegraphics[width=.7\linewidth]{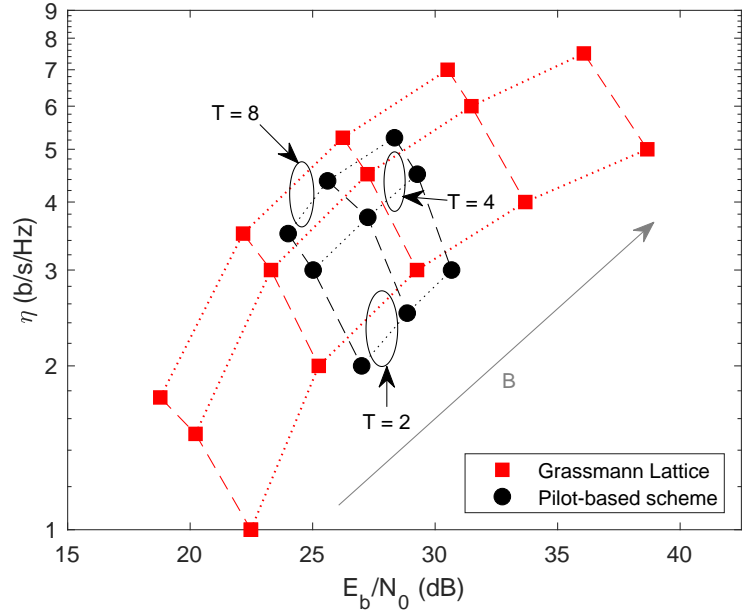}
    \end{center}
    \caption{Spectral efficiency of Grass-Lattice and a pilot-based scheme as a function of $E_b / N_0$ at $10^{-4}$ BER for $T \in \{ 2,4,8 \}$ and $N = 2$.}
    \label{fig:spectral_eff_coherent}
\end{figure}

\section{Conclusion} \label{sec:conclusions}
We have proposed a new Grassmannian constellation for noncoherent communications in SIMO channels, named Grass-Lattice, based on a measure preserving mapping from the unit hypercube to the Grassmannian of lines. Thanks to its structure, the encoding and decoding steps can be performed on the fly with no need to store the whole constellation. Further, it allows for low-complexity and efficient decoding as well as for a simple Gray-like bit labeling. Simulation results show that Grass-Lattice has symbol and bit error rate performance close to that of a numerically optimized unstructured constellation. Besides, the designed constellations outperform other structured constellations in the literature and a coherent pilot-based scheme in terms of SER and BER under Rayleigh block fading channels, in addition to being more power efficient. As mappings $\vartheta_1$ and $\vartheta_3$ have already been derived in this paper for any number of transmit antennas, further research will be done to study the extension of mapping $\vartheta_2$ and, consequently, the whole Grass-Lattice mapping, to the MIMO case.

\appendix
% you can choose not to have a title for an appendix
% if you want by leaving the argument blank

\subsection{Proof of Lemma \ref{lem:pointsunitball}}\label{app2}
Let us define $d= T-1$. The function $f_d$ is the unique solution of
$$
f(t)^{2d-1}(f(t)+tf'(t))=\frac{e^{-t^2}}{\Gamma(d+1)},\quad \lim_{t\to\infty}tf(t)=1,
$$
which satisfies $tf(t)\in[0,1)$ and can be written in terms of an incomplete Gamma function. It is easy to see that $\vartheta_2:\mathbb C^d\to\mathbb B_{\mathbb{C}^{d}}(0,1)$ is a diffeomorphism. Let us compute the Jacobian of $\vartheta_2$: if $\dot \z$ is (real) orthogonal to $\z$ then
$$
D\vartheta_2(\z)\dot \z=\dot \z f_d(\|\z\|),
$$
while for $\dot \z=\z/\|\z\|$ we have
\begin{equation*}
D\vartheta_2(\z)\frac{\z}{\|\z\|} = \frac{\z}{\|\z\|} f_d(\|\z\|)+\z f_d'(\|\z\|)
=\z\left(\frac{f_d(\|\z\|)}{\|\z\|}+f_d'(\|\z\|)\right).
\end{equation*}
Choosing any orthonormal basis of $\mathbb C^d\equiv \mathbb R^{2d}$ whose last vector is $\z/\|\z\|$ we thus have that the orthogonality of this basis is preserved by $D\vartheta_2$. The Jacobian of $\vartheta_2$ at $\z$ is then just the product of the lengths of the resulting vectors:
$$
Jac \, \vartheta_2(\z)=f_d(\|\z\|)^{2d-1}(f(\|\z\|)+\|\z\|f'(\|\z\|))=\frac{e^{-\|\z\|^2}}{\Gamma(d+1)}.
$$

\noindent Given any integrable mapping $g:\mathbb B_{\mathbb{C}^d}(0,1)\to\mathbb R$, the expected value of $g(\w)$ when $\w$ follows the distribution of the lemma is:
\begin{equation*}
I = \frac{1}{\pi^d}\int_{\z\in\mathbb C^d}g(\vartheta_2(\z))e^{-\|\z\|^2}\,d\z
=\frac{\Gamma(d+1)}{\pi^d}\int_{\z\in\mathbb C^d}g(\vartheta_2(\z))Jac \, \vartheta_2(\z)\,d\z,
\end{equation*}
which by the Change of Variables Theorem equals
$$
\frac{\Gamma(d+1)}{\pi^d}\int_{\w\in\mathbb B_{\mathbb{C}^d}(0,1)}g(\w)\,d\w.
$$
This is the expected value of $g$ in $\mathbb B_{\mathbb{C}^d}(0,1)$, since the volume of $\mathbb B_{\mathbb{C}^d}(0,1)$ is precisely $\pi^d/\Gamma(d+1)$.

%\subimport{./appendices}{appendix2.tex}

\subsection{Proof of Lemma \ref{lem:otrojacobian}}\label{app4}

That the formula for $\Theta^{-1}$ is the claimed one is easy to see: just write down the singular value decomposition of $\W= \U \binom{{\bf D}} {{\bf 0}}\V^\He$ and compose the two functions in any order to see that you get the identity map in each space. Now let us compute the Jacobian. First, note that for any given unitary $T\times T$ matrix $\U$ the isometry $\A\to \U\A$ in the domain commutes with the isometry $\B\to \U\B$ in the range, and the same happens with the isometry $\A\to\A \V$ if $\V$ is unitary of size $M$. It suffices to prove our result in the case that $\A=\binom{\D}{{\bf 0}}$ with $\D=\diag(\sigma_1,\ldots,\sigma_M)$. Let us compute the corresponding directional derivatives:
\begin{itemize}
    \item For $\dot\A=\binom{{\bf 0}}{\dot \C}$ we have
    \begin{multline*}
    D\Theta(\A)(\dot \A)=\frac{d}{dt}\mid_{t=0}\left(\binom{\D}{t\dot\C}\left(\I_M+ \begin{pmatrix} \D &t\dot\C^\He \end{pmatrix}\binom{\D}{t\dot\C}\right)^{-1/2}\right)= \\\binom{{\bf 0}}{\dot\C}(\I_M+\D^2)^{-1/2}.
    \end{multline*}
    The natural basis for $\dot\C$ then preserves orthogonality and this yields a factor for the Jacobian of $\Theta$ of:
    $$
    \prod_{m=1}^M\frac{1}{(1+\sigma_m^2)^{T-2M}}=\det(\I_M+\A^\He\A)^{-T+2M}.
    $$
    \item If $\dot \A=\binom{\deltaB_{kk}}{{\bf 0}}$, where $\deltaB_{kk}$ denotes an $M\times M$ matrix whose only nonzero term $\delta_{kk}$ is located at row $k$ and column $k$, then a direct computation shows that
    $$
    D\Theta(\A)(\dot \A)=\frac1{(1+\sigma_k^2)^{3/2}},
    $$
    and similarly if $\dot \A=\binom{j\delta_{kk}}{{\bf 0}}$ then
    $$
    D\Theta(\A)(\dot \A)=j\frac1{(1+\sigma_k^2)^{1/2}},
    $$
    which again preserves orthogonality and adds the following factor to the Jacobian of $\Theta$
    $$
    \prod_{m=1}^M\frac1{(1+\sigma_m^2)^2}=\det(\I_M+\A^\He\A)^{-2}
    $$
    \item If $\dot\A=\binom{\deltaB_{12}}{{\bf 0}}$, denoting $R=\sqrt{(1+\sigma_1^2)(1+\sigma_2^2)}$ then we have
    \begin{align*}
    D\Theta(\A)(\dot \A)=&\frac{d}{dt}\mid_{t=0} \left(\binom{\D+t\delta_{12}}{0}\left(\I_M+ \left(\D+t\delta_{21}\right)\left(\D+t\delta_{12}\right)\right)^{-1/2}\right)\\
    =&
    \frac{1}{R\sqrt{2+\sigma_1^2+\sigma_2^2+2R}}\begin{pmatrix}
    0&1+R&&\\
    -\sigma_1\,\sigma_2&0&&\\
    &&0&&\\
    &&&\ddots&\\
    &&&&0
    \end{pmatrix},
    \end{align*}
    while if $\dot\A=\binom{\deltaB_{21}}{0}$ then we have
       \begin{align*}
    D\Theta(\A)(\dot \A)=&\frac{d}{dt}\mid_{t=0} \left(\binom{\D+t\delta_{21}}{0}\left(\I_M+ \left(\D+t\delta_{12}\right)\left(\D+t\delta_{21}\right)\right)^{-1/2}\right)\\
    =&
    \frac{1}{R\sqrt{2+\sigma_1^2+\sigma_2^2+2R}}\begin{pmatrix}
    0&-\sigma_1\,\sigma_2&&\\
    1+R&0&&\\
    &&0&&\\
    &&&\ddots&\\
    &&&&0
    \end{pmatrix},
    \end{align*}
    Hence, the volume of the parallelepiped spanned by these two vectors is
    $$
    \frac{(1+R)^2-\sigma_1^2\sigma_2^2}{R^2(2+\sigma_1^2+\sigma_2^2+2R}=\frac{1}{R^2}.
    $$
    This yields a factor $\frac{1}{(1+\sigma_1^2)(1+\sigma_2^2)}$ for the Jacobian. The same computation for $\deltaB_{ij}$ gives all together:
    $$
    \prod_{m=1}^M\frac{1}{(1+\sigma_m^2)^{M-1}}=\det(\I_M+\A^\He\A)^{-(M-1)}
    $$
    \item If $\dot\A=\binom{\mathrm{j} \deltaB_{12}}{{\bf 0}}$ and later $\dot\A=\binom{\mathrm{j} \deltaB_{21}}{{\bf 0}}$ we get the same computation, which yields another factor of
        $$
    \prod_{m=1}^M\frac{1}{(1+\sigma_m^2)^{M-1}}=\det(\I_M+\A^\He\A)^{-(M-1)}.
    $$
\end{itemize}
Multiplying all the factors, we have that the Jacobian of $\Theta$ is  $\det(\I_M+\A^\He\A)^{-T}$. This finishes the proof.

\subsection{Proof of Theorem \ref{th:genproj}}\label{app_th1}
Let $G:\mathbb{G}\left(1,\mathbb{C}^T\right)\to\mathbb C$ be integrable. From Lemma \ref{cor:integrales2},
\begin{equation*}
\frac{1}{Vol(\mathbb{G}\left(1,\mathbb{C}^T\right))} \int_{[\x]\in\mathbb G(1,\mathbb C^{T})}G([\x])\,d[\x]=\frac{1}{Vol(\mathbb B_{\mathbb C^{T-1}}(0,1))}\int_{\underset{\|\w\|<1}{\w\in \mathbb C^{T-1}} }G\left( \begin{bmatrix} \sqrt{1-\|\w\|^2} \\ \w \end{bmatrix} \right)\,d\w,
\end{equation*}
where $Vol(\mathbb{S})$ denotes the volume of the set $\mathbb{S}$. From Lemma \ref{lem:pointsunitball}, this equals
$$
\frac{1}{\pi^{T-1}} \int_{\z\in\mathbb C^{T-1}} G\left( \begin{bmatrix} \sqrt{1-\|\z f_{T-1}(\|\z\|)\|^2} \\ \z f_{T-1}(\|\z\|) \end{bmatrix} \right) e^{-\|\z\|^2}\,d\z,
$$
which in turn from Lemma \ref{lem:twofold} equals
$$
\int_{(\a,\b)\in\mathcal{I}}G\left( \begin{bmatrix} \sqrt{1-\|\z f_{T-1}(\|\z\|)\|^2} \\ \z f_{T-1}(\|\z\|) \end{bmatrix} \right) \,d(\a,\b),
$$
where $$\z=(z_1,\ldots,z_{T-1})^{\textnormal{T}}, \quad z_k=F^{-1}(a_k)+ j F^{-1}(b_k).$$
All in one, we have proved that the point
\begin{equation*}
\begin{bmatrix} \sqrt{1-\|\w\|^2} \\ \w \end{bmatrix}
\end{equation*}
with $\w = \z f_{T-1}(\|\z\|)$, is uniformly distributed in $\mathbb{G}\left(1,\mathbb{C}^T\right)$.

% use section* for acknowledgement
% \section*{Acknowledgment}

% The authors would like to thank...

% Can use something like this to put references on a page
% by themselves when using endfloat and the captionsoff option.
\ifCLASSOPTIONcaptionsoff
  \newpage
\fi

% trigger a \newpage just before the given reference
% number - used to balance the columns on the last page
% adjust value as needed - may need to be readjusted if
% the document is modified later
%\IEEEtriggeratref{8}
% The "triggered" command can be changed if desired:
%\IEEEtriggercmd{\enlargethispage{-5in}}

% references section

% can use a bibliography generated by BibTeX as a .bbl file
% BibTeX documentation can be easily obtained at:
% http://www.ctan.org/tex-archive/biblio/bibtex/contrib/doc/
% The IEEEtran BibTeX style support page is at:
% http://www.michaelshell.org/tex/ieeetran/bibtex/
%\bibliographystyle{IEEEtranTCOM}
% argument is your BibTeX string definitions and bibliography database(s)
%\bibliography{IEEEabrv,../bib/paper}
%
% <OR> manually copy in the resultant .bbl file
% set second argument of \begin to the number of references
% (used to reserve space for the reference number labels box)
%

\bibliographystyle{ieeetr}
\bibliography{IEEE_TWCOM}

% biography section
% 
% If you have an EPS/PDF photo (graphicx package needed) extra braces are
% needed around the contents of the optional argument to biography to prevent
% the LaTeX parser from getting confused when it sees the complicated
% \includegraphics command within an optional argument. (You could create
% your own custom macro containing the \includegraphics command to make things
% simpler here.)
%\begin{biography}[{\includegraphics[width=1in,height=1.25in,clip,keepaspectratio]{mshell}}]{Michael Shell}
% or if you just want to reserve a space for a photo:

% You can push biographies down or up by placing
% a \vfill before or after them. The appropriate
% use of \vfill depends on what kind of text is
% on the last page and whether or not the columns
% are being equalized.

%\vfill

% Can be used to pull up biographies so that the bottom of the last one
% is flush with the other column.
%\enlargethispage{-5in}

% that's all folks
\end{document}